\pdfoutput=1
\documentclass[a4paper,UKenglish,cleveref, autoref, thm-restate,authorcolumns, colorlinks]{lipics-v2019}


\bibliographystyle{plainurl}

\title{Weighted Tiling Systems for Graphs: Evaluation Complexity} 


\author{C. Aiswarya}{Chennai Mathematical Institute, India and IRL ReLaX, CNRS France}{aiswarya@cmi.ac.in}{https://orcid.org/0000-0002-4878-7581}
{Supported by DST Inspire}

\author{Paul Gastin}{LSV, ENS Paris-Saclay, CNRS, Universit\'e Paris-Saclay,
France}{paul.gastin@ens-paris-saclay.fr}{https://orcid.org/0000-0002-1313-7722}{}

\authorrunning{C. Aiswarya and P. Gastin} 

\Copyright{C. Aiswarya and Paul Gastin} 

\ccsdesc[500]{Theory of computation~Quantitative automata}

\keywords{Weighted graph tiling, tiling automata, Evaluation, Complexity, Tree-width} 

\category{} 

\relatedversion{An extended abstract of preliminary version is accepted at FSTTCS'20.}

\supplement{}

\funding{Supported by IRL ReLaX}


\nolinenumbers 

\hideLIPIcs  

\usepackage[pdflatex,recompilepics=false]{gastex}
\usepackage{MnSymbol}
\usepackage{stmaryrd, wasysym}
\usepackage{newfloat}
\DeclareFloatingEnvironment[name=Problem]{problem}
\usepackage[inline]{enumitem}
\usepackage{xspace}
\usepackage{algorithm}
\usepackage{algpseudocode}
\algblockdefx[MATCH]{Match}{EndMatch}%
  [1]{\textbf{match} #1 \textbf{with}}%
  {\textbf{end match}}
\algdef{SxN}[CASE]{Case}{EndCase}%
  [1]{{#1}:}
\algdef{SxN}[IFTHEN]{IfThen}{EndIfThen}%
  [2]{\textbf{if}~{#1}~\textbf{then} {#2}~\textbf{end if}}
\algdef{SxN}[FORONELINE]{ForOne}{EndForOne}%
  [2]{\textbf{for}~{#1}~\textbf{do} {#2~\textbf{end for}}}
\algdef{SxN}[FORALLONELINE]{ForAllOne}{EndForAllOne}%
  [2]{\textbf{for all}~{#1}~\textbf{do} {#2~\textbf{end for}}}

\usepackage{todonotes}


\newcommand{\Ptime}{\textrm{P}}
\newcommand{\NP}{\textrm{NP}}

\newcommand{\FP}{\textrm{FP}}
\newcommand{\sharpP}{\#\textrm{P}}
\newcommand{\gapP}{\textrm{GapP}}
\newcommand{\FPSPACE}{\textsc{FPSpace}}

\newcommand{\FPtoNPlog}{\FP^{\NP[log]}}

\newcommand{\M}{M}

\newcommand{\feval}{\textsf{FVal}}

\newcommand{\fval}{\textsf{Eval}}

\newcommand{\N}{\mathbb N}
\newcommand{\Z}{\mathbb Z}
\newcommand{\Q}{\mathbb Q}

\newcommand{\Boolean}{\textsf{Boolean}}
\newcommand{\Natural}{\textsf{Natural}}
\newcommand{\Integer}{\textsf{Integer}}
\newcommand{\Rational}{\textsf{Rational}}
\newcommand{\Rationalp}{\textsf{Rational}^+}

\newcommand{\minplusN}{\textsf{min-plus-}\N}
\newcommand{\maxplusN}{\textsf{max-plus-}\N}
\newcommand{\minplusZ}{\textsf{min-plus-}\Z}
\newcommand{\maxplusZ}{\textsf{max-plus-}\Z}

\newcommand{\SR}{\mathbb S}
\newcommand{\szero}{0_\SR}
\newcommand{\sone}{1_\SR}
\newcommand{\stimes}{\otimes}
\newcommand{\splus}{\oplus}

\newcommand{\Edgenames}{\Gamma}
\newcommand{\Nodelabels}{\Sigma}
\newcommand{\Ename}{\gamma}
\newcommand{\Nlabeling}{\lambda}
\newcommand{\Edgesin}{{\Edgenames_\text{in}}}
\newcommand{\Edgesout}{{\Edgenames_\text{out}}}
\newcommand{\Types}{\textsf{Types}}
\newcommand{\type}{\tau}
\newcommand{\typemap}{\textsf{type}}
\newcommand{\A}{\mathcal A}
\newcommand{\States}{Q}
\newcommand{\Tiles}{\Delta}
\newcommand{\Weight}{\textsf{wgt}}
\newcommand{\run}{\rho}
\newcommand{\fin}{f_\text{in}}
\newcommand{\fout}{f_\text{out}}
\newcommand{\tile }{\textsf{tile}}

\newcommand\sem[1]{[\![ #1 ]\!]}
\newcommand\wsem[1]{\{\!{\mid} #1 {\mid}\!\}}

\newcommand{\fempty}{f_\emptyset}

\newcommand{\eright}{\rightarrow}
\newcommand{\edown}{\downarrow}

\newcommand{\qthis}{\bigcirc}
\newcommand{\qur}{\curvearrownwse}
\newcommand{\qdr}{\curvearrownesw}
\newcommand{\qdl}{\curvearrowsenw}
\newcommand{\qul}{\curvearrowswne}

\newcommand{\qcol}{\rotatebox[origin=c]{90}{$\boxminus$}}
\newcommand{\qrow}{\boxminus}
\newcommand{\qcr}{\boxplus}
\newcommand{\qnone}{\raisebox{.5pt}{\scalebox{1.15}{$\boxempty$}}}

\newcommand{\wts}{\mathcal T}
\newcommand{\ma}{\mathcal{M}}

\begin{document}

\maketitle

\begin{abstract}
  We consider weighted tiling systems to represent functions from graphs to a
  commutative semiring such as the Natural semiring or the Tropical semiring.  The
  system labels the nodes of a graph by its states, and checks if the
  neighbourhood of every node belongs to a set of permissible tiles, and
  assigns a weight accordingly.  The weight of a labeling is the
  semiring-product of the weights assigned to the nodes, and the weight of the
  graph is the semiring-sum of the weights of labelings.  We show that we can
  model interesting algorithmic questions using this formalism - like computing
  the clique number of a graph or computing the permanent of a matrix.  The
  evaluation problem is, given a weighted tiling system and a graph, to compute
  the weight of the graph.  We study the complexity of the evaluation problem
  and give tight upper and lower bounds for several commutative semirings.
  Further we provide an efficient evaluation algorithm if the input graph is of
  bounded tree-width.
\end{abstract}
%
\begin{gpicture}[name=gpic:permanent-matrix, ignore]
\gasset{Nw=6,Nh=6,AHLength=1.8,AHlength=1, Nframe = n, Nadjust=n, Nframe=n,fillcolor=white }
\unitlength=0.9mm

\node(n00)(0,0){0}
\node(n01)(0,10){0}
\node(n02)(0,20){1}
\node(n03)(0,30){0}
\node(n04)(0,40){1}
\node(n10)(-10,0){1}
\node(n11)(-10,10){0}
\node(n12)(-10,20){1}
\node(n13)(-10,30){0}
\node(n14)(-10,40){1}
\node(n20)(-20,0){0}
\node(n21)(-20,10){1}
\node(n22)(-20,20){1}
\node(n23)(-20,30){0}
\node(n24)(-20,40){1}
\node(n30)(-30,0){0}
\node(n31)(-30,10){1}
\node(n32)(-30,20){0}
\node(n33)(-30,30){1}
\node(n34)(-30,40){1}
\node(n40)(-40,0){0}
\node(n41)(-40,10){0}
\node(n42)(-40,20){0}
\node(n43)(-40,30){0}
\node(n44)(-40,40){1}

\drawedge(n04,n03){}
\drawedge(n03,n02){}
\drawedge(n02,n01){}
\drawedge(n01,n00){}
\drawedge(n14,n13){}
\drawedge(n13,n12){}
\drawedge(n12,n11){}
\drawedge(n11,n10){}
\drawedge(n24,n23){}
\drawedge(n23,n22){}
\drawedge(n22,n21){}
\drawedge(n21,n20){}
\drawedge(n34,n33){}
\drawedge(n33,n32){}
\drawedge(n32,n31){}
\drawedge(n31,n30){}
\drawedge(n44,n43){}
\drawedge(n43,n42){}
\drawedge(n42,n41){}
\drawedge(n41,n40){}

\drawedge(n40,n30){}
\drawedge(n30,n20){}
\drawedge(n20,n10){}
\drawedge(n10,n00){}
\drawedge(n41,n31){}
\drawedge(n31,n21){}
\drawedge(n21,n11){}
\drawedge(n11,n01){}

\drawedge(n42,n32){}
\drawedge(n32,n22){}
\drawedge(n22,n12){}
\drawedge(n12,n02){}

\drawedge(n43,n33){}
\drawedge(n33,n23){}
\drawedge(n23,n13){}
\drawedge(n13,n03){}

\drawedge(n44,n34){}
\drawedge(n34,n24){}
\drawedge(n24,n14){}
\drawedge(n14,n04){}
\end{gpicture}
\begin{gpicture}[name=gpic:permanent-run, ignore]
\gasset{Nw=6,Nh=6,AHLength=1.8,AHlength=1, Nframe = n, Nadjust=n, Nframe=n,fillcolor=white }
\unitlength=0.9mm
\node(m00)(0,0){\huge$\color{red}\qul$}
\node(m01)(0,10){\huge\color{red}$\qul$}
\node(m02)(0,20){\huge\color{red}$\bigcirc$}
\node(m03)(0,30){\huge\color{red}$\qdl$}
\node(m04)(0,40){\huge\color{red}$\qdl$}
\node(m10)(-10,0){\huge$\color{red}\qul$}
\node(m11)(-10,10){\huge$\color{red}\qul$}
\node(m12)(-10,20){\huge$\color{red}\qur$}
\node(m13)(-10,30){\huge\color{red}$\bigcirc$}
\node(m14)(-10,40){\huge\color{red}$\qdl$}
\node(m20)(-20,0){\huge$\color{red}\qul$}
\node(m21)(-20,10){\huge\color{red}$\bigcirc$}
\node(m22)(-20,20){\huge$\color{red}\qdr$}
\node(m23)(-20,30){\huge$\color{red}\qdr$}
\node(m24)(-20,40){\huge$\color{red}\qdl$}
\node(m30)(-30,0){\huge$\color{red}\qul$}
\node(m31)(-30,10){\huge$\color{red}\qur$}
\node(m32)(-30,20){\huge$\color{red}\qur$}
\node(m33)(-30,30){\huge$\color{red}\qur$}
\node(m34)(-30,40){\huge\color{red}$\bigcirc$}
\node(m40)(-40,0){\huge\color{red}$\bigcirc$}
\node(m41)(-40,10){\huge$\color{red}\qdr$}
\node(m42)(-40,20){\huge$\color{red}\qdr$}
\node(m43)(-40,30){\huge$\color{red}\qdr$}
\node(m44)(-40,40){\huge$\color{red}\qdr$}
\drawedge(m04,m03){}
\drawedge(m03,m02){}
\drawedge(m02,m01){}
\drawedge(m01,m00){}
\drawedge(m14,m13){}
\drawedge(m13,m12){}
\drawedge(m12,m11){}
\drawedge(m11,m10){}
\drawedge(m24,m23){}
\drawedge(m23,m22){}
\drawedge(m22,m21){}
\drawedge(m21,m20){}
\drawedge(m34,m33){}
\drawedge(m33,m32){}
\drawedge(m32,m31){}
\drawedge(m31,m30){}
\drawedge(m44,m43){}
\drawedge(m43,m42){}
\drawedge(m42,m41){}
\drawedge(m41,m40){}

\drawedge(m40,m30){}
\drawedge(m30,m20){}
\drawedge(m20,m10){}
\drawedge(m10,m00){}
\drawedge(m41,m31){}
\drawedge(m31,m21){}
\drawedge(m21,m11){}
\drawedge(m11,m01){}

\drawedge(m42,m32){}
\drawedge(m32,m22){}
\drawedge(m22,m12){}
\drawedge(m12,m02){}

\drawedge(m43,m33){}
\drawedge(m33,m23){}
\drawedge(m23,m13){}
\drawedge(m13,m03){}

\drawedge(m44,m34){}
\drawedge(m34,m24){}
\drawedge(m24,m14){}
\drawedge(m14,m04){}

\gasset{Nw=3,Nh=3,AHLength=1.8,AHlength=1, Nframe = n, Nadjust=n, Nframe=n,fillcolor=white }
\unitlength=0.9mm

\node(n00)(0,0){0}
\node(n01)(0,10){0}
\node(n02)(0,20){1}
\node(n03)(0,30){0}
\node(n04)(0,40){1}
\node(n10)(-10,0){1}
\node(n11)(-10,10){0}
\node(n12)(-10,20){1}
\node(n13)(-10,30){0}
\node(n14)(-10,40){1}
\node(n20)(-20,0){0}
\node(n21)(-20,10){1}
\node(n22)(-20,20){1}
\node(n23)(-20,30){0}
\node(n24)(-20,40){1}
\node(n30)(-30,0){0}
\node(n31)(-30,10){1}
\node(n32)(-30,20){0}
\node(n33)(-30,30){1}
\node(n34)(-30,40){1}
\node(n40)(-40,0){0}
\node(n41)(-40,10){0}
\node(n42)(-40,20){0}
\node(n43)(-40,30){0}
\node(n44)(-40,40){1}
\end{gpicture}
\begin{gpicture}[name=gpic:clique-graph,ignore]
\gasset{Nh=5, Nw =5, Nframe = y}
\node(A)(10,0){A}
\node(B)(30, 0){B}
\node(C)(38,18){C}
\node(D)(20,30){D}
\node(E)(2,18){E}
\gasset{AHnb = 0}
\drawedge(A,B){}
\drawedge(B,C){}
\drawedge(A,D){}
\drawedge(B,E){}
\drawedge(A,E){}
\drawedge(D,C){}
\drawedge(C,E){}
\drawedge(B,D){}
\end{gpicture}
\begin{gpicture}[name=gpic:clique-matrix,ignore]
\gasset{Nw=6,Nh=6,AHLength=1.8,AHlength=1, Nframe = n}
\node(AE)(0,0){1}
\node(BE)(0,10){1}
\node(CE)(0,20){1}
\node(DE)(0,30){0}
\node(E)(0,40){1}
\node(AD)(-10,0){1}
\node(BD)(-10,10){1}
\node(CD)(-10,20){1}
\node(D)(-10,30){1}
\node(AC)(-20,0){0}
\node(BC)(-20,10){1}
\node(C)(-20,20){1}
\node(AB)(-30,0){1}
\node(B)(-30,10){1}
\node(A)(-40,0){1}

\drawedge(A,AB){}
\drawedge(AB,AC){}
\drawedge(AC,AD){}
\drawedge(AD,AE){}
\drawedge(B,BC){}
\drawedge(BC,BD){}
\drawedge(BD,BE){}
\drawedge(C,CD){}
\drawedge(CD,CE){}
\drawedge(D,DE){}

\drawedge(E,DE){}
\drawedge(DE,CE){}
\drawedge(CE,BE){}
\drawedge(BE,AE){}
\drawedge(D,CD){}
\drawedge(CD,BD){}
\drawedge(BD,AD){}
\drawedge(C,BC){}
\drawedge(BC,AC){}
\drawedge(B,AB){}

\gasset{ExtNL=y,NLdist=1, NLangle=135}
\node(a)(-40,0){A}
\node(b)(-30,10){B}
\node(c)(-20,20){C}
\node(d)(-10,30){D}
\node(e)(0,40){E}
\end{gpicture}
\begin{gpicture}[name=gpic:clique-run, ignore]
\gasset{Nw=6,Nh=6,AHLength=1.8,AHlength=1, Nframe = n}
\node(AE)(0,0){\huge$\color{red}\qcol$}
\node(BE)(0,10){\huge$\color{red}\qcr$}
\node(CE)(0,20){\huge$\color{red}\qcr$}
\node(DE)(0,30){\huge$\color{red}\qcol$}
\node(E)(0,40){\huge$\color{red}\qcr$}
\node(AD)(-10,0){\huge$\color{red}\qnone$}
\node(BD)(-10,10){\huge$\color{red}\qrow$}
\node(CD)(-10,20){\huge$\color{red}\qrow$}
\node(D)(-10,30){\huge$\color{red}\qnone$}
\node(AC)(-20,0){\huge$\color{red}\qcol$}
\node(BC)(-20,10){\huge$\color{red}\qcr$}
\node(C)(-20,20){\huge$\color{red}\qcr$}
\node(AB)(-30,0){\huge$\color{red}\qcol$}
\node(B)(-30,10){\huge$\color{red}\qcr$}
\node(A)(-40,0){\huge$\color{red}\qnone$}

\node(AE)(0,0){1}
\node(BE)(0,10){1}
\node(CE)(0,20){1}
\node(DE)(0,30){0}
\node(E)(0,40){1}
\node(AD)(-10,0){1}
\node(BD)(-10,10){1}
\node(CD)(-10,20){1}
\node(D)(-10,30){1}
\node(AC)(-20,0){0}
\node(BC)(-20,10){1}
\node(C)(-20,20){1}
\node(AB)(-30,0){1}
\node(B)(-30,10){1}
\node(A)(-40,0){1}

\drawedge(A,AB){}
\drawedge(AB,AC){}
\drawedge(AC,AD){}
\drawedge(AD,AE){}
\drawedge(B,BC){}
\drawedge(BC,BD){}
\drawedge(BD,BE){}
\drawedge(C,CD){}
\drawedge(CD,CE){}
\drawedge(D,DE){}

\drawedge(E,DE){}
\drawedge(DE,CE){}
\drawedge(CE,BE){}
\drawedge(BE,AE){}
\drawedge(D,CD){}
\drawedge(CD,BD){}
\drawedge(BD,AD){}
\drawedge(C,BC){}
\drawedge(BC,AC){}
\drawedge(B,AB){}
\end{gpicture}
%

\section{Introduction}
\label{sec:introduction}

Weighted automata have been classically studied over words, as they naturally
extend automata from representing languages to representing functions from words
to a semiring.

We are interested in finite state formalisms for representing functions from
\emph{graphs} to a semiring.  Many natural algorithmic questions on graphs are
about computing a function, such as the clique number, weight of the shortest
path etc.  It is interesting to see if one can design weighted automata to model
such problems.  Further can one design efficient algorithms for problems modeled
by such weighted automata?

We study weighted tiling systems (WTS),
 a variant of the weighted graph
automata of Droste and D\"uck \cite{DrosteD15}, motivated by the graph
acceptors of Thomas~\cite{Thomas91}.  This subsumes many quantitative models
that have been studied on words, trees~\cite{DrostePV05,DrosteV06}, nested
words~\cite{Mathissen_2010}, pictures~\cite{Fichtner11}, Mazurkiewicz
traces~\cite{DrosteG99,Meinecke06,BolligM07}, etc. The reader is referred to 
the handbook \cite{DrosteKV2009handbook} for more details and references. 
Many of these works are mainly interested in expressivity
questions, and show that the model has good expressive power.  The model is
also easy to understand as it is formulated in terms of tiling/colouring
respecting local constraints.  We reiterate the expressivity by modeling
computational problems on graphs using this model.  Our focus is on the
computational complexity of the evaluation problem. It is closer in spirit to
\cite{Gastin_2014} which provides an efficient evaluation algorithm for 
weighted pebble automata on words.

We show that many algorithmic questions,
like computing the clique number, computing the permanent of a matrix, or
counting variants of SAT, can be naturally modeled using this formalism.  We
investigate the computational complexity of the evaluation problem
and obtain tight upper- and lower-bounds for various
semirings.  

To give more details, a WTS has a finite number of states and a run labels the
vertices of a graph with states.  The tiles (analogous to transitions) observe
the neighbourhood of a vertex under the labeling, and assign a weight
accordingly.  The weight of the run is the semiring-product of the weights thus
assigned, and the weight assigned to a graph is the semiring-sum of the weights
of the runs.  We only consider commutative semirings and hence the order in
which the product is taken does not matter.

The evaluation problem is to compute the weight of an input graph in an input
WTS. We study the computational complexity of this problem for various
semirings.  Over Natural semiring and non-negative rationals, the problem is
shown to be $\sharpP$-complete.  Over integers and rationals the problem is
$\gapP$-complete.  Over tropical semirings -- $(\N, \max, +), (\Z, \max, +),
(\N, \min, +), (\Z, \min, +)$ -- the problem is $\FPtoNPlog$ complete.

We further consider the evaluation problem for graphs of bounded tree-width and
show that they are computable in time polynomial in the WTS and linear in the
graph.  Bounded tree-width captures a variety of formal models of concurrent and
infinite state systems such as Mazurkiewicz traces, nested words, and decidable
under-approximations of message passing automata or multi-pushdown automata
\cite{MadhusudanP11, CGN12, AGN-atva14}.

Even though our focus is evaluation, and not expressiveness of the model, we get
a deep insight into the modeling power of this formalism through the upper and
lower complexity bounds.  For instance, we cannot polynomially encode the
traveling salesman problem (lower bound $\textsf{FP}^\NP$) in our formalism over
tropical semiring (upper bound $\textsf{FP}^{\NP[\log]}$) unless the polynomial
hierarchy collapses \cite{Krentel88}.

\section{Model}
\label{sec:model}

First we will fix the notations for semirings, graphs and then introduce the WTS
formally.

\subparagraph{Preliminaries.}

Let $\N$ denote the set of natural numbers including $0$, $\Z$ the integers, and
$\Q$ the rationals.

\smallskip

Let $A=\{a_1, \dots a_n\}$ and $B$ be two sets.  
We sometimes write a function $f \colon A \to B$ explicitly by listing the
image of each element: $f = [a_1 \mapsto
f(a_1), \dots, a_n \mapsto f(a_n)]$.  The set of all functions from $A$ to $B$ is
denoted $B^A$.  If $A$ is $\emptyset$ then the only relation (and hence
function) from $A$ to $B$ is $\emptyset$.  We denote this trivial empty function
by $\fempty$.

\smallskip

Let $\M$ be a non-deterministic Turing machine.  The number of accepting runs
of $\M$ on an input $x$ is denoted $\#\M(x)$, and the number of rejecting runs
of $\M$ on $x$ is denoted $\#\overline{\M}(x)$.  

\smallskip

A semiring is an algebraic structure $\SR = (S, \splus, \stimes, \szero, \sone)
$ where $S$ is a set, $\oplus$ and $\otimes$ are two binary operations on $S$,
$(S, \oplus, \szero) $ is a commutative monoid, $(S,\otimes,\sone)$ is a monoid,
$\otimes$ distributes over $\oplus$, $\szero$ is an annihilator for $\otimes$.
A semiring is \emph{commutative} if $\stimes$ is commutative.

Examples include $\Boolean = (\{0, 1\}, \vee, \wedge, 0, 1)$, $\Natural = (\N,
+, \times, 0, 1)$, $\Integer = (\Z, +, \times, 0, 1)$, $\Rational = (\Q, +,
\times, 0, 1)$ and $\Rationalp = (\Q_{\ge 0}, +,
\times, 0, 1)$.  Further
examples are tropical semirings: 
$\maxplusN = (\N\cup\{-\infty\},\max,+,-\infty,0)$,
$\maxplusZ = (\Z\cup\{-\infty\},\max,+,-\infty,0)$, 
$\minplusN = (\N\cup\{+\infty\},\min,+,+\infty,0)$ and  
$\minplusZ = (\Z\cup\{+\infty\},\min,+,+\infty,0)$.
We will consider only these semirings in this paper.  Note that
all these semirings are commutative.

\subparagraph{Graphs.}
We consider graphs with different sorts of edges.  \ For example, a grid will
have horizontal successor edges, and vertical successor edges.  A binary tree
will have left-child relations and right-child relations.  Message sequence
charts will have process-successor relations and message send-receive relations.
These graphs have bounded degree, and for each sort of edge, a vertex will have
at most one outgoing/incoming edge of that sort\footnote{This choice is mainly
for notational convenience, and is not really a restriction provided we consider
only bounded degree graphs.  Another option would be to enumerate the neighbours
in some order and address a neighbour as the $i$th incoming/outgoing
neighbour.}.  Our definition of graphs below allows to capture such graph
classes.

\smallskip
Let $\Edgenames$ be a finite set of edge names, and let $\Nodelabels$ be a
finite set of node labels.  A $(\Edgenames,\Nodelabels)$-graph $G =
(V,(E_\Ename)_{\Ename\in \Edgenames},\Nlabeling)$ has a finite set of vertices
$V$, an edge relation $E_\Ename \subseteq V \times V$ for every $\Ename \in
\Edgenames$, and a mapping $\Nlabeling \colon V \to \Nodelabels$ assigning a label from
$\Nodelabels$ to each vertex $v \in V$.  The graphs we consider will have at
most one outgoing edge and at most one incoming edge for every edge name.  That
is, for each $\Ename \in \Edgenames$, for all $v \in V$, $|\{u \mid (v,u) \in
E_\Ename \}| \le 1$ and $|\{u \mid (u,v) \in E_\Ename \}| \le 1$.

The type of a vertex is determined by the set of names of incoming edges
and the set of names of outgoing edges.  
For example, the root of a tree has no incoming left-child or right-child edges
and leaves of a tree have no outgoing left- or right-child.  A type $\type =
(\Edgesin, \Edgesout) $ indicates that the set of incoming (resp.\ outgoing)
edge names is $\Edgesin$ (resp.\ $\Edgesout$).
Let $\Types = 2^\Edgenames \times 2^\Edgenames$ be the set of all
types.  We define $\typemap \colon V \to \Types$ and use $\typemap(v)$ to denote
the type of vertex $v$.

\begin{remark}\label{rem:arbitrarydegree}
  Even though we consider only bounded degree graphs, we are able to model graph
  functions on arbitrary graphs (even edge weighted) as illustrated in the
  examples below.  Basically an arbitrary graph is input via its adjacency
  matrix, which is naturally a grid, a special case of the graphs that we can
  handle.  We can even model problems on arbitrary graphs with edge weights.
\end{remark}

\newcommand{\occ}{\textsf{Occ}}

\subparagraph{A weighted Tiling System}
is a finite state mechanism for defining functions from a class of graphs to a
weight domain. It has a finite set
of states and a set of permissible tiles for
each type of vertices.
Formally, a \emph{weighted tiling system} (WTS) over
$(\Edgenames,\Nodelabels)$-graphs and a semiring $\SR = (S, \oplus, \otimes,
\szero, \sone) $ is a tuple $\wts = (\States, \Tiles,  \Weight)$ where
\begin{itemize}[nosep]
  \item  $\States$ is the finite set of states,
  
  \item $\Tiles = \bigcup_{\type \in \Types}\Tiles_\type$ --- 
  for a type $\type = (\Edgesin, \Edgesout) \in \Types$, the set $
  \Tiles_\type\subseteq \States^\Edgesin \times \States \times \Nodelabels \times
  \States^\Edgesout $ gives the set of permissible tiles of type $\type$,

  \item $\Weight \colon \Tiles \to S$,  assigns a  weight for each tile. 
\end{itemize}

\smallskip

A run $\run$ of $\wts$ on a graph $G = (V,(E_\Ename)_{\Ename\in
\Edgenames},\Nlabeling)$ is a labeling of the vertices by states that conforms
to $\Tiles$.  Given a labeling $\run\colon V \to \States$, for a vertex $v \in
V$ with $\typemap(v) = (\Edgesin, \Edgesout)$ we define the tile of $v$ wrt.\ 
$\run$ to be $\tile_\run(v)=(\fin, \run(v), \Nlabeling(v), \fout)$ where $\fin
\colon \Edgesin \to \States$ is given by $\Ename \mapsto \run(u)$ if $(u,v) \in
E_\Ename$ and $\fout \colon \Edgesout \to \States$ is given by $\Ename \mapsto
\run(u)$ if $(v,u) \in E_\Ename$.  
A labeling $\rho\colon V \to \States$ is a \emph{run} if for each $v \in V$,
$\tile_\run(v) \in \Tiles_{\typemap(v)}$.

\smallskip 

The \emph{weight of a run} $\run$, denoted $\Weight(\run)$, is the
product of the weights of the tiles in $\run$. 
With commutative semirings, we do not need to specify an order for this product. 
The value $\sem{\wts}(G)$ computed by $\wts$ for a graph $G$
is the sum of the weights of the runs.  That is,
\begin{align*}
 \sem\wts(G) &= \bigoplus_{\run \mid \run\text{ is a run of $\wts$ on  }G} \Weight(\run) 
 &\qquad\qquad
 \Weight(\run) &= \bigotimes_{v \in V} \Weight(\tile_\run(v)) .&
\end{align*}

\begin{remark}
  The WTS is a variant of the weighted graph automata (WGA) of \cite{DrosteD15}.
  There are two main differences.  First, WGA admits tiles of bigger radius and
  the tile size is a parameter.  This is not more powerful, as it can be
  realized with immediate neighborhood tiles like in WTS. Second, WGA allows
  occurrence constraints.  We discuss this in more detail in
  Section~\ref{sec:discussion}.
\end{remark}

We give some examples of WTS below, which will also serve as reductions proving
complexity lower-bounds in Section~\ref{sec:complexity-arbitrary-graphs}.

\begin{example}[A WTS to compute the clique number of a graph]
  \label{ex:clique-number} 
  The \emph{clique number} of a graph is the size of the largest clique in the graph. 

  The graphs on which we want to compute the clique number have unbounded
  degrees indeed.  In our setting we consider only bounded degree graphs.  Hence
  we need to encode any arbitrary graph as a bounded degree graph.  One way to
  do that is to consider the adjacency matrix and represent this matrix using a
  grid graph.

  For the particular case of clique number, our input is an undirected graph, so
  we will consider a lower-right triangular matrix in a lower-right triangular
  grid graph.  For this we let $\Edgenames = \{\eright, \edown\}$ and
  $\Nodelabels = \{0, 1\}$.  The labels of all diagonal vertices are $1$.  A
  graph is depicted in Figure~\ref{fig:clique-graph} and its lower-right
  triangular adjacency matrix is depicted in Figure~\ref{fig:clique-matrix}.

\medskip\noindent
\begin{minipage}[b]{3cm}
\gusepicture[scale=.8]{gpic:clique-graph}
\captionof{figure}{A graph}
\label{fig:clique-graph}
\end{minipage}\hfill
\begin{minipage}[b]{5cm}
\begin{center}
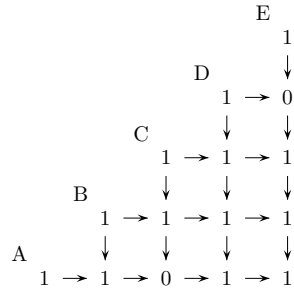

\gusepicture[scale=.8]{gpic:clique-matrix}
\end{center}\captionof{figure}{The lower-right triangular adjacency matrix of the graph of Figure~\ref{fig:clique-graph}  as a grid graph}
\label{fig:clique-matrix}
\end{minipage}\hfill
\begin{minipage}[b]{5cm}
\begin{center}
\gusepicture[scale=.8]{gpic:clique-run}
\end{center}
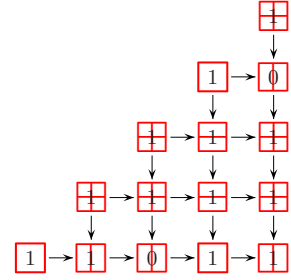
\captionof{figure}{A run. Three tiles B, C and E gets weights 1, and hence the weight of this run is 3.}
\label{fig:clique-run}
\end{minipage}

  \medskip
  We will now construct a WTS over the tropical semiring $\maxplusN$ that
  computes the clique number on a lower triangular grid graph.  The run of the
  WTS will guess a subset of vertices of the original graph (corresponds to
  labeling some diagonal elements with state $\qcr$) and checks that there is an
  edge between every pair of these (corresponds to checking the label is 1, if
  the row and column end in a $\qcr$-labeled vertex).  The weight of such a run
  will be the size of the subset, and the $\max$ over all the runs gives us the
  clique number as required.

  Let $ Q = \{ \qcr, \qcol, \qrow, \qnone\}$.  A run will
  label a subset of diagonal vertices with $\qcr$.  A vertex is labeled with
  $\qcol$ (resp.\ $\qrow$, $\qcr$) if its column (resp.\ row, both) starts in a
  vertex labeled $\qcr$.  In addition a vertex may get state $\qcr$ only if its
  label is $1$.  All other vertices get state $\qnone$.
  A run on the graph in Figure~\ref{fig:clique-matrix} is depicted in
  Figure~\ref{fig:clique-run}.

  Tiles for diagonal vertices are given by $\Tiles_{(\emptyset,\Edgesout)} =
  \{(\fempty, \qnone, 1, \fout), (\fempty, \qcr, 1, \fout) \} \,.$
  For an inside vertex we have ($\fout$ being arbitrary in all tuples):
  \begin{align*}
    \Tiles_{(\{\eright, \edown\}, \Edgesout)} &=
    \{(\fin, \qnone, b, \fout )\mid b \in \{0,1\}, \fin(\eright)\in\{\qcol,\qnone\},
    \fin(\edown)\in\{\qrow,\qnone\}\}\\
    &\cup\{ (\fin, \qcr, 1, \fout )\mid  \fin(\eright) \in \{\qcr,  \qrow\} , 
    \fin(\edown) \in  \{ \qcr , \qcol \}\}\\
    &\cup\{ (\fin, \qrow, b, \fout )\mid  b \in \{0,1\},\fin(\eright) \in  \{\qcr , \qrow\} , 
    \fin(\edown)\in\{\qrow,\qnone\}\}\\
    &\cup\{ (\fin, \qcol, b, \fout )\mid  b \in \{0,1\}, \fin(\eright)\in\{\qcol,\qnone\}, 
    \fin(\edown) \in   \{\qcr ,\qcol \}\} \,.
  \end{align*}
  
  The weight of a tile of the form $(\fempty, \qcr, 1, \fout)$ is $1$.  Notice
  that only the diagonal vertices labeled $\qcr$ will get such a tile.  The
  weight of all other tiles is $0$.
  Thus the weight of a run is the number of diagonal vertices labeled $\qcr$ -
  which corresponds to a subset of vertices inducing a clique.  The maximum
  weight across different runs will compute the clique number as required.
  \qed
\end{example}

\begin{example}[A WTS to compute the permanent of a (0,1)-matrix]\label{ex:permanent}
  We will model (0,1)-matrices as (0,1)-labelled grids.  As in
  Example~\ref{ex:clique-number}, we let $\Edgenames = \{\eright, \edown\}$ and
  $\Nodelabels = \{0, 1\}$.  A $5 \times 5$ (0,1)-matrix as a grid graph is
  illustrated in Figure~\ref{fig:permanent-matrix}.

  \smallskip
  
  We will define a WTS $\wts$ on such graphs over \Natural\ such that
  $\sem\wts(G)$ is the permanent of the 0,1 matrix $A$ represented by $G$.  In
  each run exactly one vertex in each row and each column will be circled --
  representing one permutation $\sigma$ of $\{1, \dots ,n \}$ if $G$ is an $n
  \times n $ grid.  The weight of the tile on the circled vertex will be the
  vertex label (0 or 1) interpreted as an integer.  Every other tile will have
  weight 1.  Thus the weight of a run will be $\prod_{i} A(i, \sigma(i))$ where
  $\sigma$ is the permutation represented by the run.  Finally the value of a
  graph $G$ representing an $n \times n$ $(0,1)$-matrix $A$ will be
  $\sum_{\sigma} \prod_{i} A(i, \sigma(i))$ which is its permanent.

  The WTS $\wts$ has five states: $Q = \{\qthis, \qur, \qdr, \qdl, \qul\}$.  We
  will define tiles so as to accept only the labeling reflecting the following:
  \begin{itemize}[nosep]
    \item a vertex labeled $\qthis$ means it is the circled vertex in its row and 
    column,
    
    \item a vertex $v$ labeled $\qur$ means that the circled vertex in its column is
    upward of $v$, and the circled vertex in its row is to the right of $v$,
    
    \item similarly for other states $\qdr, \qdl, \qul$.
  \end{itemize}
  The tiles are given formally below.
  The weight function $\Weight$ assigns weight $0$ to any tile labeling a $0$
  labeled node with $\qthis$.  The weight of all other tiles is $1$.
  A run of this WTS is illustrated in Figure~\ref{fig:permanent-run}.
  
\medskip\noindent 
\begin{minipage}[t]{6.5cm}
\begin{center}
\gusepicture[scale = 0.9]{gpic:permanent-matrix}
\end{center}
\captionof{figure}{A $5 \times 5$ (0,1)-matrix as a grid graph}
\label{fig:permanent-matrix}
\end{minipage}
\hfill
\begin{minipage}[t]{7.1cm}
\begin{center}
\gusepicture[scale = 0.9]{gpic:permanent-run}
\end{center}
\captionof{figure}{A run of the WTS $\wts$ on the graph in
Fig.~\ref{fig:permanent-matrix}.  It has weight $0$ as two tiles have
$\Weight$ 0.}
\label{fig:permanent-run}
\end{minipage}

  \medskip
  We now describe the tiles formally.
  For the top-left vertex we have 
  \begin{align*}
    \Tiles_{(\emptyset, \{\eright, \edown\})} &= \{
    (\fempty, \qthis, b,\fout) \mid b \in \{0,1\}, \fout(\eright) = {\qdl}, \fout(\edown) = {\qur}\}\\
    &  \cup \{(\fempty,\qdr, b, \fout) \mid b \in \{0,1\}, \fout(\eright)\in 
    \{\qdr, \qthis\}, \fout(\edown) \in \{\qdr, \qthis\} \}
  \end{align*}
  The tiles for other corner vertices are analogous. 
  For the left border vertices we have $\Tiles_{(\{\edown\}, \{\eright, \edown\})} =$
  \begin{align*}
    & \{(\fin, \qthis, b,\fout) \mid b \in \{0,1\}, \fin(\edown) = {\qdr}, 
    \fout(\eright)\in\{\qdl,\qul\}, \fout(\edown) = {\qur} \} \\
    \cup~ 
    & \{(\fin,\qdr, b, \fout) \mid b \in \{0,1\}, \fin(\edown) = {\qdr},  
    \fout(\eright) \in \{\qur,\qdr,\qthis\}, \fout(\edown) \in \{\qdr,\qthis\} \} \\
    \cup~
    & \{(\fin,\qur, b, \fout) \mid b \in \{0,1\}, \fin(\edown) \in \{\qur,\qthis\},  
    \fout(\eright) \in \{\qur,\qdr,\qthis\}, \fout(\edown) = {\qur} \} 
  \end{align*}
  The tiles for other border vertices are analogous.
  For an interior vertex, we have 
  \begin{align*}
    \Tiles_{(\{\eright, \edown\}, \{\eright, \edown\})} &= 
    \{(\fin, \qthis, b,\fout) \mid b \in \{0,1\}, \fin(\edown) \in \{\qdl,\qdr\},  
    \fin(\eright) \in \{\qur,\qdr\}, 
    \\
    &\hspace{35mm} \fout(\eright) \in \{\qul,\qdl\}, \fout(\edown) \in \{\qul,\qur\} \} 
    \\
    &\cup \{(\fin,\qdr, b, \fout) \mid b \in \{0,1\}, \fin(\edown) \in \{\qdl,\qdr\},
    \fin(\eright) \in \{\qur,\qdr\},
    \\
    &\hspace{35mm} \fout(\eright)\in\{\qur,\qthis,\qdr\}, \fout(\edown)\in\{\qdl,\qthis,\qdr\} \} 
    \\
    &\cup \{(\fin,\qur, b, \fout) \mid b \in \{0,1\}, \fin(\edown)\in\{\qul,\qthis,\qur\},
    \fin(\eright) \in \{\qur,\qdr\},
    \\
    &\hspace{35mm} \fout(\eright) \in \{\qur,\qthis,\qdr\}, \fout(\edown) \in \{\qul,\qur\} \} 
    \\
    &\cup \{(\fin,\qul, b, \fout) \mid b \in \{0,1\}, \fin(\edown)\in\{\qul,\qthis,\qur\},
    \fin(\eright) \in \{\qul,\qthis,\qdl\},
    \\
    &\hspace{35mm} \fout(\eright) \in \{\qul,\qdl\}, \fout(\edown)\in\{\qul,\qur\} \} 
    \\
    &\cup \{(\fin,\qdl, b, \fout) \mid b \in \{0,1\}, \fin(\edown)\in\{\qdl,\qdr\},
    \fin(\eright) \in\{\qul,\qthis,\qdl\},
    \\
    &\hspace{35mm} \fout(\eright) \in \{\qul,\qdl\}, \fout(\edown)\in\{\qdl,\qthis,\qdr\} \} 
  \end{align*}
  Finally, we describe the weight function $\Weight$.
  The weight of a tile of the form $(\fin, \qthis, 0, \fout)$ is $0$. The weight of all other tiles is $1$. 
  \qed
\end{example}

\begin{example}[Permanent of matrix with entries from $\N$]\label{ex:permanent-natural}
  The purpose of this example is to illustrate that it is possible to encode
  natural numbers, which may appear as matrix entries or edge weights, also as
  bounded degree graphs with a fixed alphabet $\Sigma$.

  A length $k$ bit string $ b_{k-1}\cdots b_1b_0$ where $b_i\in\{0,1\}$ for all
  $0\leq i<k$, is represented by a path graph of length $k$.  The vertices of
  this path graph are labelled with $1$ or $0$ to indicate the value of the bit,
  and the edges are labeled $\prec$.  We describe a WTS on such path graphs
  whose computed weight is the binary number $\sum_i b_i2^{i}$.  The WTS guesses
  a prefix ending with label $1$.  All the nodes in the prefix take state $q_0$
  and all nodes after the prefix may take the two states $q_1$ or $q_2$.  The
  weight of all tiles is $1$.  The number of runs is $\sum_{i: b_i=1} 1^{k-i}
  \times 2^i = \sum_i b_i2^i$.

  As before, we will have an $n\times n$ grid graph to represent the matrix, but
  the vertices of the grid graph take a neutral label, say $X$.  A path graph
  originates from every vertex of the grid graph indicating the entry of the
  matrix at that cell.  Now, to compute the permanent, the path graphs starting
  from a circled vertex can start the WTS described in the previous paragraph.
  All other path graphs vertices can be labeled only by a special state $q_4$.
  The weights of all permissible tiles are $1$.  The weight computed by one
  permutation will indeed be the product of the entries.  This crucially depends
  on the distributivity of the semiring.  Thus, this WTS computes the permanent
  of an arbitrary matrix with entries in $\N$.
  \qed
\end{example}

\subparagraph{Evaluation problem (\fval)}
is to compute $\sem\wts(G)$, given the following input:
\begin{center}
\begin{tabular}{l  l}
$\wts$ &: a WTS over $(\Edgenames,\Nodelabels)$-graphs and a semiring $\SR$, and\\
$G$ &: a $(\Edgenames,\Nodelabels)$-graph.
\end{tabular}
\end{center}
We study the complexity of this problem in
Section~\ref{sec:complexity-arbitrary-graphs}, for various semirings.  We
provide an efficient algorithm for this problem in the case of bounded
tree-width graphs in Section~\ref{sec:complexity-btw-graphs}.  In
Section~\ref{sec:discussion} we discuss the decision variants of the above
problem.

\section{Evaluation complexity: Arbitrary graphs}
\label{sec:complexity-arbitrary-graphs}

Recall that we only consider the boolean semiring, the counting semirings over
$\N$, $\Z$, $\Q$ or $\Q_{\geq 0}$ and the tropical semirings over $\N$ or $\Z$.

Given a WTS $\wts$ and a graph $G$, we can compute $\sem{\wts}(G)$ in polynomial
space as follows.  Initialise the current aggregate to $\szero$.  Enumerate in
lexicographic order through the different labelings of the vertices of $G$ with
states of $\wts$.  For each labeling, if it conforms to $\Delta$, compute its
weight and add to the current aggregate.  Thus $\fval$ belongs to \FPSPACE\ --- the set of functions computable in polynomial space.
 
\begin{theorem}
  Problem $\fval$ is in \FPSPACE. 
\end{theorem}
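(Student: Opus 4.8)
The plan is to give a deterministic polynomial-space algorithm computing $\sem\wts(G)$ by brute force over all state labelings of $G$. There are $|\States|^{|V|}$ candidate labelings, far too many to store, so we keep only a single current labeling $\run\colon V\to\States$ — costing $|V|\cdot\lceil\log|\States|\rceil$ bits — together with an accumulator holding the value over the labelings processed so far, initialised to $\szero$. We iterate $\run$ through $\States^V$ in lexicographic order, advancing from one labeling to the next by an in-place increment; for each $\run$ we test whether it is a run and, if so, fold $\Weight(\run)$ into the accumulator using $\splus$. When the enumeration is exhausted the accumulator equals $\bigoplus_{\run}\Weight(\run)=\sem\wts(G)$, which we output.

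Testing whether a fixed $\run$ is a run is straightforward: for each vertex $v$ we read off $\tile_\run(v)=(\fin,\run(v),\Nlabeling(v),\fout)$ from $\run$ and the edge relations of $G$, and check $\tile_\run(v)\in\Tiles_{\typemap(v)}$ by a table lookup in $\wts$; this runs in polynomial time, hence polynomial space. Likewise $\Weight(\run)=\bigotimes_{v\in V}\Weight(\tile_\run(v))$ is computed by a single scan of $V$.

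The only point needing care is that the accumulator stays polynomially bounded. For the counting semirings over $\N$ and $\Z$, $\stimes$ and $\splus$ are ordinary integer multiplication and addition; a product of $|V|$ input weights has bit-length at most the sum of their bit-lengths, and a sum of at most $|\States|^{|V|}$ such values has bit-length larger by only $|V|\log|\States|$ — still polynomial. For the tropical semirings a product is an integer sum of $|V|$ summands (with $\pm\infty$ flagged by one extra bit), and the $\splus$-accumulator is simply one of the summand values, so it never grows. For $\Rational$ (and $\Rationalp$) one avoids denominator blow-up by fixing the common denominator $D=\prod_{\delta\in\Tiles}q_\delta^{|V|}$, where $q_\delta$ is the denominator of $\Weight(\delta)$ in lowest terms: $D$ has polynomial bit-length, the product $\prod_{v}q_{\tile_\run(v)}$ divides $D$ for every run, so each $D\cdot\Weight(\run)$ is an integer, the accumulator is maintained as an integer numerator over $D$, and a single division by $D$ at the end recovers $\sem\wts(G)$. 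Hence the current labeling, the run test, and the accumulator all fit in polynomial space, and $\fval\in\FPSPACE$. I expect this accumulator bound — in particular the rational case — to be the only step requiring thought; the enumeration itself is routine.
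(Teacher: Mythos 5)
Your proposal is correct and follows essentially the same route as the paper: enumerate all state labelings of $G$ in lexicographic order while storing only the current labeling and a running $\splus$-accumulator, testing each labeling against $\Tiles$ and folding in its weight. The extra care you take to bound the accumulator's size (including the common-denominator trick for $\Rational$) is a sound elaboration of details the paper leaves implicit.
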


However, for particular semirings the complexity is different as stated in the 
following subsections. 

\subsection{$(+,\times)$-semirings}

\begin{theorem}\label{thm:complexity}
  The evaluation problem is $\sharpP$-complete over \Natural, and non-negative
  \Rational.  It is $\gapP$-complete over \Integer\ and \Rational.
\end{theorem}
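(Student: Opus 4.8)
My plan is to establish the four statements by splitting into membership and hardness for each semiring, relying on the standard characterisations $\sharpP = \{\#\M : \M \text{ an NP machine}\}$ and $\gapP = \{\#\M - \#\overline{\M}\}$. For membership in $\sharpP$ over \Natural, I would build an NP machine $\M$ that, on input $(\wts,G)$, first nondeterministically guesses a labeling $\run \colon V \to \States$, checks in polynomial time that $\run$ is a run (each local tile lies in $\Tiles$), and rejects if not; if $\run$ is a run, $\M$ then branches further to produce exactly $\Weight(\run) = \prod_{v} \Weight(\tile_\run(v))$ accepting paths. The subtlety is that tile weights are given in binary, so I cannot afford $\Weight(\tile)$ linearly-many branches per vertex; instead $\M$ guesses, for each vertex $v$, a number in $\{1,\dots,\Weight(\tile_\run(v))\}$ by guessing its binary representation and verifying it is in range — the product of these ranges over all $v$ gives exactly $\Weight(\run)$ accepting continuations. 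Summing over all runs $\run$ yields $\#\M(\wts,G) = \sem\wts(G)$. For non-negative \Rational, I would clear denominators: compute a common denominator $D$ (the product, or lcm, of all tile denominators raised to $|V|$, which has polynomially many bits) so that $D \cdot \sem\wts(G) \in \N$ is computed by a $\sharpP$ function, and $\sharpP$ is closed under the needed postprocessing for functional output; I should be careful to argue the output format is acceptable (a rational is output as numerator/denominator pair, with the $\sharpP$ part computing the numerator over the fixed denominator $D$). For \Integer\ and \Rational, membership in $\gapP$ is obtained the same way, except a tile of negative weight contributes its magnitude to the rejecting-path count: $\M$ uses one extra nondeterministic bit per run-or-per-negative-tile to route the $|\Weight(\run)|$ paths into the accepting or rejecting pile according to $\mathrm{sign}(\Weight(\run)) = \prod_v \mathrm{sign}(\Weight(\tile_\run(v)))$, so that $\#\M - \#\overline{\M} = \sem\wts(G)$; closure of $\gapP$ under multiplication/scaling handles the rational case.

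For hardness, I would reduce from canonical complete problems. Over \Natural, the WTS of Example~\ref{ex:permanent} already computes the permanent of a $(0,1)$-matrix, and \textsc{Permanent} (equivalently $\#\textsc{PerfectMatching}$ for bipartite graphs) is $\sharpP$-complete by Valiant; the reduction from an arbitrary $\sharpP$ instance factors through this, and Example~\ref{ex:permanent} gives the WTS explicitly, so $\sharpP$-hardness over \Natural\ is immediate. Alternatively one can reduce from $\#\textsc{SAT}$ directly by a tiling-system encoding, which is the route the counting-SAT remark in the introduction hints at; either suffices. For non-negative \Rational, $\sharpP$-hardness is inherited since \Natural\ embeds in $\Q_{\ge 0}$. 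For \Integer\ (and hence \Rational), $\gapP$-hardness: take any $f \in \gapP$, write $f = \#\M - \#\overline{\M}$; the map $x \mapsto \#\M(x)$ is $\sharpP$, hence (via the \Natural-hardness construction) computable as $\sem{\wts_1}(G_x)$ for a WTS over \Integer, and likewise $x \mapsto \#\overline{\M}(x)$ as $\sem{\wts_2}(G_x)$; then build a single WTS over \Integer\ on a graph that is the disjoint union of $G_x$ (for $\wts_1$) and a copy of $G_x$ where all tile weights are negated (for $\wts_2$), noting that the value of a disjoint union is the product — so instead I should place the two computations side by side connected through a fresh "gadget" vertex whose tiles select one side and whose weight is $1$ on one side and $-1$ on the other, making $\sem\wts = \#\M(x) - \#\overline{\M}(x) = f(x)$. (Since \Natural\ already requires a product-of-runs structure, an additive combination needs this small routing gadget; if that is awkward on connected graphs one can instead observe that \fval\ is defined for graphs that may be disconnected and use disjoint union with a sign flip, using that $\sem\wts$ of a disjoint union multiplies — so to get a difference I first reduce $\gapP$ to the specific form $\#\M - \#\M'$ and then use the standard $\gapP$ closure trick of representing $a - b$ as a single counting difference, which the $\Z$-weighted construction supports natively by giving one run weight $+1$ per accepting path of $\M$ and $-1$ per accepting path of $\M'$.)

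The main obstacle I anticipate is the hardness direction's bookkeeping about \emph{output encoding and the additive gadget}: $\sharpP$ and $\gapP$ are classes of functions with a single counting interpretation, so I must make sure the WTS value is literally the count (or signed count) with no spurious polynomial factors, and that combining a "plus part" and a "minus part" into one WTS on one graph does not accidentally multiply their values (which is what disjoint union does) rather than add them. Getting the routing/selection gadget right — a single entry vertex with an edge to each of the two sub-instances, whose tiles force exactly one sub-instance to be "activated" per run while the other is labeled by inert weight-$1$ states — is the delicate point, together with checking that the reductions are polynomial-time computable (the binary-weight guessing in the membership proof, and the matrix/SAT encodings in the hardness proof, are all routine once this is set up). I would also double-check that for the rational cases the denominators introduced stay polynomially bounded in bit-length, which they do since each tile has a fixed denominator and there are $|V|$ factors.
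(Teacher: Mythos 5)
Your overall strategy is essentially the paper's: for the $\sharpP$ upper bound, an NP machine that guesses a labeling and then spawns exactly $\Weight(\run)$ accepting branches (you do this by per-vertex range guessing in binary, the paper by writing the product in binary and branching on its bits --- both mechanisms work and have polynomially many guessed bits); the rational cases are handled by clearing denominators with the lcm $\ell$ and dividing the integer/natural answer by $\ell^{|V|}$; $\sharpP$-hardness comes from the permanent (Example~\ref{ex:permanent}) or a direct \#CNF-SAT encoding; and $\gapP$-hardness reduces to computing a difference $\#\varphi_1-\#\varphi_2$ of two counts packed into one instance, with exactly one part ``activated'' per run, the inactive part forced into a unique inert weight-$1$ labeling, and the sign carried by a single $-1$ tile --- which is exactly what the paper's construction does on a grid holding the two formula encodings side by side. (One small point you already half-corrected yourself on: negating every tile weight of the second copy would multiply its value by $(-1)^{|V|}$, not by $-1$; carrying the sign on one selector/end tile, as you finally propose and as the paper does, is the right move.)

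The one genuine gap is in your $\gapP$ upper bound. In a $\gapP$ computation every rejecting path counts $-1$, so you cannot simply ``reject'' when the guessed labeling fails to be a run, nor when a per-vertex binary guess falls outside the range $\{1,\dots,\Weight(\tile_\run(v))\}$: as described, your machine's value $\#\M-\#\overline{\M}$ equals $\sem{\wts}(G)$ \emph{minus} the number of all such failed verification paths, which is wrong. The paper addresses precisely this by making every blocked computation (invalid tiling, or weight-zero situations) branch nondeterministically into one accepting and one rejecting continuation, so these paths cancel in $\#\ma'-\#\overline{\ma'}$; you need the same neutralization for your out-of-range guesses as well. Alternatively, you can sidestep the issue by expressing $\sem{\wts}(G)$ as the difference of two $\sharpP$ functions (the sum of the weights of runs with positive weight, and the sum of the magnitudes of runs with negative weight, each computed by your $\sharpP$ machine with an added polynomial-time sign check) and invoking that $\gapP$ is the closure of $\sharpP$ under subtraction, a characterization the paper itself records. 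With that repair, the rest of your argument goes through as in the paper.
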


The upper bounds hold for arbitrary graphs, and the lower bounds hold for the
special case of grids.  The weights can be assumed to be given in binary.

\smallskip

A function $f$ is in $\#$\Ptime\ if there is an \NP\ machine $M$ such that $f(x)
= \#M(x)$.  That is, it denotes the set of function problems that correspond to
counting the number of accepting paths in a non-deterministic polynomial time
turing machine.  Computing the permanent of a (0,1)- matrix is a
$\sharpP$-complete problem \cite{Valiant79}, and hence the $\sharpP$-hardness
claimed above follows from Example~\ref{ex:permanent}.  We give an alternate
hardness proof by a reduction from $\#$-CNF-SAT.
 
A function $f(x)$ is in \gapP\ if there is a non-deterministic polynomial time
turing machine $M$ such that $f(x) = \#M(x)-\#\overline M(x)$.
\gapP\ is also the closure of $\#$P under subtraction.

\smallskip

Most of this subsection is devoted to the proof of Theorem~\ref{thm:complexity}.
First we give the non-deterministic Turing machines realising the upper bounds
for \Natural\ and \Integer.  After that we give reductions from respective
counting versions of SAT to prove the lower bounds.  
The case of \Rational\ is finally considered.

\subparagraph{The Turing Machine $\ma$ such that $\#\ma(\wts,G) = \sem{\wts}(G)$:}
We describe a non-determinsitc polynomial time turing machine $\ma$ that takes as input 
a WTS $\wts$ over \Natural\ with weights given in binary, and a graph $G$.  
The number of accepting runs $\#\ma(\wts,G) = \sem{\wts}(G)$.  We assume the
states, weights etc.  are given by some standard encoding.
 
The turing machine $\ma$ non-deterministically guesses a labeling of the
vertices of $G$ by the states of $\wts$.  Then it computes the product $w$ of
the weights of the tiles in the guessed tiling and writes it in binary (MSB on
the left) in a different tape.  Computing the product can be done in time
polynomial in $|G|$ and $\log(k)$ where $k = \max \{x \mid x \text{ is a weight of
some tile of } \wts\}$.

Afterwards it enters a phase which will have exactly $w$ different accepting
branches.  Simply decrementing the value while it is positive, and
non-deterministically accepting at any step will have $w$ accepting branches,
but the running time is exponential.  We want the machine to run in polynomial
time.  Hence we implement this phase similar to
Example~\ref{ex:permanent-natural}.  It runs in $\mathcal{O}(|w|)$ steps as we
detail below.

$\ma$ scans $w$ from left to right starting in some state $q$.  While in
state $q$ and the current cell is labeled $0$ it moves right.  If in state $q$
and the current cell is labelled $1$ it moves right and non determistically
stays in state $q$ or enters one of the two special states $q_0$ or $q_1$.  When
it is in state $q_0$ or $q_1$ and the current cell is labelled with $0$ or $1$,
it will move right and non deterministically chose either $q_0$ or $q_1$.
Finally, When in state $q_0$ or $q_1$ and the current cell is \textit{blank}
(i.e., the scan of $w$ is over), then $\ma$ accepts.  Thus if the $i$th bit
from the right of $w$ is labeled $1$, then $\ma$ can have $2^{i}$ accepting
runs if it moved from state $q$ to $q_0$ or $q_1$ when reading this bit.
Switching from state $q$ can occur at any $1$-labelled cell, and hence
$\ma$ will have $w$ many accepting runs.

The machine $\ma$ non deterministically picks a labeling at first, and hence the
total number of accepting runs $\#\ma(\wts,G) = \sem{\wts}(G)$.  With this we
prove the $\sharpP$ upper bound for \Natural.

\subparagraph{The Turing Machine $\ma'$ such that
$\#\ma'(\wts,G) - \#\overline{\ma'}(\wts,G) = \sem{\wts}(G)$:}
This is similar to the machine $\ma$ above.  There are two differences.  The
machine $\ma'$ still guesses a labeling of vertices of $G$ with states of $\wts$ over \Integer\ 
and computes the weight $w$.  If $w$ is positive, it proceeds exactly as $\ma$
does to produce $w$ accepting runs.  If the weight $w$ is negative, the machine
$\ma'$ proceeds analogously but with states $q'$, $q_0'$ and $q_1'$ instead.  If
the machine is in state $q'_0$ or $q'_1$ with current cell \textit{blank} then
it rejects instead of accepting.  The second difference is for blocked runs
(e.g., if the guessed labeling of vertices of $G$ by states of $\wts$ is not a
valid tiling, or if at the end the machine is still in state $q$ or $q'$ with
current cell \textit{blank}).  In such a case, $\ma'$ will non-deterministically
proceed to either accept or reject.  Thus the net difference between accepting
runs and rejecting runs is kept intact and $\#\ma'(\wts,G) -
\#\overline{\ma'}(\wts,G) = \sem{\wts}(G)$. This proves the $\gapP$ upper bound for \Integer.

\newcommand{\gphi}{G_\varphi}
\subparagraph{Encoding a CNF formula $\varphi$ in a grid $\gphi$:}
Given a CNF formula $\varphi$ with $n$ variables and $m$ clauses, we encode it
in an $n \times m $ grid with node labels $\{p, n, \star\}$.  If the node $(i, j)$
is labeled by $p$ (resp.\  $n$) it means that the $i$th variable appears in $j$th
clause positively (resp.\  negatively).  The node $(i,j)$ is labeled $\star$ if
the $i$th variable does not occur in the $j$th clause.

\newcommand{\wtscount}{\wts^\#}
\newcommand{\qtrue}{q_\textsf{true}}
\newcommand{\qfalse}{q_\textsf{false}}
\newcommand{\qptrue}{q'_\textsf{true}}
\newcommand{\qpfalse}{q'_\textsf{false}}

\subparagraph{A WTS $\wtscount$ over \Natural\ for counting $\#\varphi$:}
Recall that $\#\varphi$ is the number of satisfying assignments for the formula 
$\varphi$.
We assume input to the WTS $\wtscount$ is given as $\gphi$ --- a
$\{p,n,\star\}$-labeled grid encoding a CNF formula.

A state of $\wtscount$ is a pair from
$\{\qtrue,\qfalse\}\times\{\qptrue,\qpfalse\}$.  The first part of a state
indicates a truth assignment with $\qtrue$ and $\qfalse$.  The allowed tiles
make sure that in this part the truth assignment remains the same along a row.
The second part of a state
indicates with $\qptrue$ and $\qpfalse$ the partial evaluation of the formula.
A $p$-labeled node which is assigned $\qtrue$ from the first part, and an
$n$-labeled node which is assigned $\qfalse$ from the first part gets the value
$\qptrue$ in the second part of the state (call this condition A for future
reference).  Further all the successor nodes in the column of the $\qptrue$
labeled node also gets the value $\qptrue$, except for the nodes in the last
row.  For the nodes in the last row, it gets the value $\qptrue$ if the left
neighbour is labeled $\qptrue$ (assume this is satisfied if the left neighbour
does not exist), and a) if it satisfies condition A or b) if the node above is
labeled $\qptrue$.  Otherwise the nodes get the value $\qpfalse$.  The second
part of a state labeling a node $(n, j)$ in the last row indicates the
evaluation of the prefix of the formula until the $j$th clause.

The tiles capture the description above.
The weight of all tiles is $1$, except for the tile labeling
the last node $(n,m)$.  If it is labeled $(-, \qptrue)$ then the weight is $1$,
otherwise it is $0$.  The value $\sem{\wtscount}(\gphi) = \#\varphi$, the number
of satisfying assignments.

This proves the $\sharpP$ lower bound for \Natural.  As alluded to earlier, the
permanent computation (Example~\ref{ex:permanent}) gives an alternate lower
bound proof.

\newcommand{\wtsgap}{\wts^\textsf{gap}}
\newcommand{\qskip}{q_\text{skip}}

\subparagraph{A WTS $\wtsgap$ over \Integer\ for counting $\#\varphi_1 - \#\varphi_2$:}
We will reduce the $\gapP$-complete problem of computing $\#\varphi_1 -
\#\varphi_2$, where $\varphi_1$ and $\varphi_2$ are input CNF formulas on the
same set of $n$ variables with $m_1$ and $m_2$ clauses respectively.  We
represent the input in an $n \times (m_1 + m_2)$ grid by putting $G_{\varphi_1}$
and $G_{\varphi_2}$ side by side.  The node labels contain a special tag $i \in
\{1,2\}$ to indicate that it comes from $G_{\varphi_i}$.  The WTS $\wtsgap$ will
ensure that rows are of the form $1^\ast2^\ast$ and columns are of the form
$1^\ast$ or $2^\ast$.  In a run it evaluates either $\varphi_1$ or $\varphi_2$
similar to $\wtscount$.  If it is evaluating $\varphi_i$ all nodes with the tag
$3-i$ gets a special state $\qskip$.  The weight of all tiles is $1$, except for
the tile labeling the nodes $(n,m_1)$ and $(n,m_1 + m_2)$.
If the node $(n,m_1)$ is labeled $(-, \qptrue)$ or $\qskip$ then the weight is
$1$, otherwise it is $0$.  If the node $(n,m_1 + m_2)$ is labeled $(-, \qptrue)$
(resp.\ $\qskip$) then the weight is $-1$ (resp.\ $1$), otherwise it is $0$.

\subparagraph{\Rational}
We will use counting reduction from $\Rational$ (resp.\ {non-negative}
$\Rational$) to the evaluation problem over \Integer\ (resp.\ $\Natural$) in
order to prove the upper bounds.  First we will transform an input $(\wts, G)$
of the evaluation problem over \Rational\ (resp.\ {non-negative} $\Rational$) to
an input $(\wts', G, )$ over $\Integer$ (resp.\ \Natural).  In $\wts'$ we will
multiply the weight of a tile by $\ell$ - the lcm of the denominators appearing
in the weights of any tile of $\wts$.  The multiplication can be performed in
time polynomial.
Now $\wts'$ is a WTS over \Integer\ (resp.\ \Natural), and following the $\gapP$
procedure (resp.\ $\sharpP$ procedure) we compute $\sem{\wts'}(G)$.  Now, we
transform the output back to the required output over $\Rational$ (resp.\
{non-negative} $\Rational$) by dividing with $\ell^{|V_G|}$.  That is, $\fval(G,
\A) = \frac{\fval(G, \A')}{\ell^{|V_G|}}$.

Notice that we allow the weights to be given in binary.  The lcm $\ell$ and
$\ell^{|V_G|}$ can be computed in polynomial time.  The counting reduction is
hence polynomial.  This proves the upper bounds.

\smallskip

The $\gapP$-hardness (resp.\ $\#\Ptime$-hardness) follows because \Integer\
(resp.\ $\Natural$) is a special case of $\Rational$ (resp.\ {non-negative}
$\Rational$).

\newcommand{\memb}{\textsf{Membership}}
\subsection{\Boolean\ semiring.}
Note that the evaluation problem $\fval$ over $\Boolean$ is in fact the
classical Membership problem (denoted $\memb$) and is indeed a decision problem.
We can check in \NP\ whether the value is $1$ (witnessed by the \NP\ machine
$\ma$, if the input is assumed to be over $\Boolean$ then $\times$ serve as
$\wedge$).  It is also \NP-hard by a simple reduction from CNF SAT (witnessed by
$\wts^\#$ interpreted over \Boolean).

\begin{theorem}\label{thm:boolean}
	$\memb$ is \NP-complete.
\end{theorem}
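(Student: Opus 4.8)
The statement claims $\memb$ (the evaluation problem over $\Boolean$, equivalently: does a given WTS have a run on a given graph) is $\NP$-complete. The plan is to prove membership in $\NP$ and $\NP$-hardness separately, reusing machinery already built in the excerpt.

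\emph{Membership in $\NP$.} I would reuse the nondeterministic machine $\ma$ constructed earlier for the $\Natural$ case. Over $\Boolean$, the semiring product $\otimes$ is $\wedge$ and the sum $\oplus$ is $\vee$, so $\sem{\wts}(G) = 1$ iff there exists a run $\run$ of $\wts$ on $G$ whose weight is $1$, i.e.\ every tile used has weight $1$ (weight $0$ being the annihilator $\szero$). The machine guesses a labeling $\run\colon V \to \States$ in polynomial time, then verifies in polynomial time that for every $v \in V$ we have $\tile_\run(v) \in \Tiles_{\typemap(v)}$ and $\Weight(\tile_\run(v)) = 1$; it accepts iff both checks pass. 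Each vertex check is local and clearly polynomial in $|G|$ and $|\wts|$, so the whole verification is polynomial. Hence $\sem{\wts}(G)=1$ iff $\ma$ has an accepting run, placing $\memb$ in $\NP$. (One can equivalently invoke Theorem~\ref{thm:complexity}'s machine $\ma$ directly, noting $\#\ma(\wts,G) > 0 \iff \sem\wts(G) = 1$ in the Boolean reading.)

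\emph{$\NP$-hardness.} I would reduce from CNF-SAT, reusing the construction $\wtscount$ from the previous subsection but interpreting it over $\Boolean$ instead of $\Natural$. Given a CNF formula $\varphi$ with $n$ variables and $m$ clauses, build the grid $\gphi$ over labels $\{p,n,\star\}$ exactly as described, in polynomial time. The tiles of $\wtscount$ enforce that the first state-component is constant along each row (a truth assignment) and the second component correctly propagates the partial evaluation of $\varphi$, with weight $1$ everywhere except that the final node $(n,m)$ has weight $1$ iff it is labeled $(-,\qptrue)$ and weight $0$ otherwise. Over $\Boolean$, a run has weight $1$ iff all its tiles have weight $1$ iff the guessed assignment satisfies $\varphi$; and $\sem{\wtscount}(\gphi) = \bigvee_\run \Weight(\run)$ equals $1$ iff some satisfying assignment exists. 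Thus $\varphi \in \mathrm{SAT} \iff \sem{\wtscount}(\gphi) = 1$, a polynomial-time many-one reduction, so $\memb$ is $\NP$-hard.

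\emph{Main obstacle.} There is no serious obstacle: both directions are essentially bookkeeping, reusing $\ma$ and $\wtscount$. The only point requiring a little care is checking that $\wtscount$, whose correctness was argued in the $\Natural$ semiring (where it \emph{counts} satisfying assignments), still does the right thing when every weight is reinterpreted in $\Boolean$ — which holds because $0,1$ and the operations $+,\times$ restricted to $\{0,1\}$-outputs behave exactly like $\vee,\wedge$ on the relevant values, so "number of satisfying assignments is nonzero" translates to "disjunction over runs is $1$." Combining the two parts gives Theorem~\ref{thm:boolean}.
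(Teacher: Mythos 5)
Your proposal is correct and follows essentially the same route as the paper: membership in $\NP$ via the machine $\ma$ reinterpreted over $\Boolean$ (where $\times$ acts as $\wedge$), and $\NP$-hardness by the CNF-SAT reduction using $\wtscount$ read over $\Boolean$. Your write-up merely spells out the bookkeeping that the paper leaves implicit, so no further changes are needed.
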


\subsection{Tropical semirings.}

\begin{theorem}
  We assume the weights are given in unary. 
  The evaluation problem over any tropical semiring is $\FPtoNPlog$-complete.
\end{theorem}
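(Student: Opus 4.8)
The plan is to establish both the upper bound (membership in $\FPtoNPlog$) and the lower bound ($\FPtoNPlog$-hardness), working over, say, $\maxplusN$; the other three tropical semirings are handled by symmetric arguments (swap $\max$ for $\min$, or allow negative weights with only minor bookkeeping).

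\textbf{Upper bound.} First I would observe that, given a WTS $\wts$ over $\maxplusN$ with weights in unary and a graph $G$, the value $\sem{\wts}(G) = \max_{\run} \Weight(\run)$ where $\Weight(\run) = \sum_{v} \Weight(\tile_\run(v))$ is a sum of polynomially many numbers each given in unary, hence is bounded by a polynomial $N(|\wts|,|G|)$ and writable in $\mathcal{O}(\log N)$ bits. Therefore the maximum weight of a run can be found by binary search: the predicate ``there exists a run $\run$ with $\Weight(\run) \ge t$'' is in $\NP$ (guess the labeling, check it conforms to $\Tiles$, sum the unary weights, compare with $t$), and $\mathcal{O}(\log N)$ many such queries, each with the threshold $t$ written in binary, pin down the exact value. (One must also handle the corner case where $G$ admits no run at all, i.e.\ the value is $-\infty$; this is a single extra $\NP$ query.) This puts $\fval$ in $\FPtoNPlog$.

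\textbf{Lower bound.} For hardness I would reduce from a known $\FPtoNPlog$-complete problem; the natural candidate is the problem of computing the size of a maximum satisfying assignment / the optimal value of a weighted MAX-SAT instance, or most cleanly, computing the lexicographically maximum satisfying assignment's value — but the cleanest is MAX-CLIQUE (computing the clique number), which is $\FPtoNPlog$-complete by Krentel's results. In fact Example~\ref{ex:clique-number} already exhibits a WTS over $\maxplusN$ whose value on the (lower-triangular grid encoding of the) adjacency matrix of a graph $H$ is exactly the clique number of $H$, and the reduction $H \mapsto G_H$ is polynomial-time computable. That single example furnishes the hardness. Alternatively, to keep things self-contained with respect to the $\#$-SAT-style constructions used earlier in the section, I would adapt $\wtscount$: instead of counting satisfying assignments, let a run guess a truth assignment, evaluate the CNF formula column-by-column as before, and give the last node weight $-\infty$ if the formula is falsified and weight $0$ otherwise; then superimpose on each row a second component that contributes weight $1$ for each variable set to true (or, for the lex-max variant, weight $2^{n-i}$ for variable $i$ set to true, which is fine in unary only if $n$ is logarithmic — so the clique encoding or a direct MAX-SAT encoding is preferable). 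The upshot is $\sem{\wts}(G_\varphi)$ equals the maximum number of true variables over satisfying assignments, which is $\FPtoNPlog$-hard to compute.

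\textbf{Main obstacle.} The genuinely delicate point is the unary-weights hypothesis: it is what keeps the attainable run-weights polynomially bounded, so that binary search terminates after logarithmically many $\NP$ queries; with binary weights the value could be exponentially large and one would only get $\FP^{\NP}$, not $\FPtoNPlog$. Correspondingly, on the hardness side one must make sure the reduction produces a WTS whose tile weights are small (unary) — the clique encoding of Example~\ref{ex:clique-number} uses only weights $0$ and $1$, so it is fine, whereas any "binary counter" trick would break the unary bound. The remaining work — verifying the $\NP$ predicate is correctly decidable, checking the binary-search bookkeeping, and confirming the reduction is logspace/polytime — is routine.
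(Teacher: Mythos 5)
Your proposal is correct and follows essentially the same route as the paper: binary search over the polynomially bounded range of run weights (thanks to unary weights) with $\mathcal{O}(\log)$ many $\NP$ oracle queries of the form $\sem{\wts}(G)\ge t$ for the upper bound, and the clique-number WTS of Example~\ref{ex:clique-number} together with Krentel's $\FPtoNPlog$-completeness of computing the clique number for the lower bound. Your remarks on the $-\infty$ corner case and on why unary weights are essential are sound minor refinements of the same argument.
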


$\FPtoNPlog$ is the class of functions computable by a polynomial time turing
machine with logarithmically many queries to $\NP$.

\begin{proof}
  We will prove the upper bound for $\maxplusZ$.  The case of $\maxplusN$ is
  subsumed.  The cases of $\minplusN$ and $\minplusZ$ are analogous.

  Let $k$ be the maximal constant and $\ell$ be the minimal constant (other than
  $+/-\infty$) appearing in the WTS $\A$.  The maximum possible weight of a run
  is $n \times k$ and the minimum is $n \times \ell$ where $n$ is the number of
  vertices in the input graph.  We will do a binary search in the set $ W = \{n
  \times \ell, \dots, -1, 0, 1, \dots, n \times k \}$ checking if $\sem{\A}(G)
  \ge s$ to find the value of $\sem{\A}(G) $.  In each iteration of the binary
  search, we make an oracle call to the \NP\ machine for $\sem{\A}(G) \ge s$.
  The number of \NP\ oracle queries is $\mathcal{O}(\log (n \times k) )$ which
  is only logarithmic in the input size.  Recall that the weights are encoded in
  unary.

  Finding the clique number is an $\FPtoNPlog$-complete
  problem~\cite{Krentel88}.  From Example~\ref{ex:clique-number}, the lower
  bound follows.
\end{proof}

\section{Efficient evaluation for bounded tree-width graphs}
\label{sec:complexity-btw-graphs}

\newcommand{\add}[2]{\mathop{\textsf{Add}_{#1}^{#2}}}
\newcommand{\forget}[1]{\mathop{\textsf{Forget}_#1}}
\newcommand{\rename}[2]{\mathop{\textsf{Rename}_{#1,#2}}}
\newcommand{\ttunion}{\oplus}
\newcommand{\dom}{\textsf{dom}}
\newcommand{\Bc}{\mathcal{B}}
\newcommand{\val}{\mathsf{val}}

\newcommand{\TT}{\textsf{TT}\xspace}
\newcommand{\TTs}{\textsf{TTs}\xspace}
\newcommand{\kTT}{$k$-\textsf{TT}\xspace}
\newcommand{\kTTs}{$k$-\textsf{TTs}\xspace}
\newcommand{\kTW}{$k$-\textsf{TW}\xspace}

\newcommand{\kPT}{$k$-\textsf{word}\xspace}
\newcommand{\kPTs}{$k$-\textsf{words}\xspace}
\newcommand{\kPW}{$k$-\textsf{PW}\xspace}
\newcommand{\kWords}{\mathsf{W}_k}

In this section, we show that the problem $\fval$ can be solved efficiently when
restricted to graphs of bounded tree-width (the bound is not part of the input).
By efficient, we mean time polynomial wrt.\ the WTS $\wts$ and linear wrt.\ the
graph $G$ (see Theorems~\ref{thm:kPW-feval} and~\ref{thm:kTW-feval} below).
Bounded tree-width covers many graphs used to model behaviours of concurrent or
infinite-state systems.  For example, it is well-known that words and trees have
tree-width 1, nested words used for pushdown systems have tree-width 2,
Mazurkiewicz traces describing behaviours of concurrent asynchronous systems
with rendez-vous, and most decidable under-approximations of Turing complete
models such as multi-pushdown automata, message
passing automata with unbounded FIFO channels, 
etc.~\cite{MadhusudanParlato-POPL11,Aiswarya-PhD14,BolligGastin-mpri2019}.
We start by explaining our results for bounded path-width since this is 
technically simpler. Then we explain how this is extended to bounded tree-width.

\subsection{Bounded path-width evaluation}

\subparagraph{A path decomposition} of a $(\Gamma,\Sigma)$-graph
$G=(V,(E_\gamma)_{\gamma\in\Gamma},\lambda)$, is a sequence $V_1,\ldots,V_n$ of
nonempty subsets of vertices satisfying:
\begin{enumerate}[nosep]
  \item  for all $v\in V$, we have $v\in V_i$ for some $1\leq i\leq n$,

  \item  for all $(u,v)\in\bigcup_{\gamma\in\Gamma}E_\gamma$, we have 
  $u,v\in V_i$ for some $1\leq i\leq n$,

  \item  for all $1\leq i\leq j\leq k\leq n$, we have $V_i\cap V_k\subseteq V_j$.
\end{enumerate}
The width of the path decomposition is $\max\{|V_i|-1\mid 1\leq i\leq n\}$.
The path-width of a graph $G$ is the least $k$ such that $G$ admits a path 
decomposition of width $k$.

Words have path-width 1, but trees, nested words, grids have unbounded
path-width.

\smallskip
We present below an equivalent definition of path-width which will be convenient
to solve the evaluation problem on graphs with bounded path-width.
Let $[k]=\{0,1,\ldots,k\}$.
Graphs over $(\Gamma,\Sigma)$ of path-width at most $k$ can be described with
words over the alphabet
$$
  \Omega_k = \{(i,a) \mid i\in[k],~ a\in\Sigma\} 
  \cup \{\forget{i} \mid i\in[k]\} 
  \cup \{\add{i,j}{\gamma} \mid i,j\in[k],~ \gamma\in\Gamma\}
$$
The semantics of a word $\tau\in\Omega_k^{*}$ is a colored graph
$\wsem\tau=(G_\tau,\chi_\tau)$ where $G_\tau$ is a $(\Gamma,\Sigma)$-labeled
graph and $\chi_\tau\colon [k]\to V$ is a partial injective function coloring
some vertices of $G_\tau$.  We say that a color $i\in[k]$ is \emph{active} in
$\tau$ if it is in the domain of $\chi_\tau$.  The semantics is defined by
induction on the length of $\tau$.  The semantics of the empty word
$\tau=\varepsilon$ is the empty graph.  Assuming that
$\wsem{\tau}=(V,(E_\gamma)_{\gamma\in\Gamma},\lambda,\chi)$, we define the effect
of appending a new letter to $\tau$:
$(i,a)$ adds a new $a$-labeled vertex with color $i$, provided $i$ is not active
in $\tau$, $\forget{i}$ removes color $i$ from the domain of the color map, and
$\add{i,j}{\alpha}$ adds an $\alpha$-labeled edge between the vertices colored $i$
and $j$ (if such vertices exist, i.e., if $i,j$ are active in $\tau$). 
Formally,
\begin{itemize}[nosep]
  \item $\wsem{\tau\cdot(i,a)}=(V',(E_\gamma)_{\gamma\in\Gamma},\lambda',\chi')$
  is defined if $i\notin\dom(\chi)$ and in this case $V'=V\uplus\{v\}$,
  $\lambda'(v)=a$ and $\lambda'(u)=\lambda(u)$ for all $u\in V$,
  $\dom(\chi')=\dom(\chi)\uplus\{i\}$, $\chi'(i)=v$ and $\chi'(j)=\chi(j)$ for
  all $j\in\dom(\chi)$.
  
  \item
  $\wsem{\tau\cdot\forget{i}}=(V,(E_\gamma)_{\gamma\in\Gamma},\lambda,\chi')$
  with $\dom(\chi')=\dom(\chi)\setminus\{i\}$ and $\chi'(j)=\chi(j)$ for all
  $j\in\dom(\chi')$.

  \item
  $\wsem{\tau\cdot\add{i,j}{\alpha}}=(V,(E'_\gamma)_{\gamma\in\Gamma},\lambda,\chi)$
  with $E'_\gamma=E_\gamma$ if $\gamma\neq\alpha$ and
  $$
  E'_\alpha= 
  \begin{cases}
    E_\alpha & \text{if } \{i,j\}\not\subseteq\dom(\chi) \\
    E_\alpha\cup\{(\chi(i),\chi(j))\} & \text{otherwise.}
  \end{cases}
  $$
  \qedhere
\end{itemize}

We say that a word $\tau$ over $\Omega_k$ is \emph{well-formed} if the following conditions 
are satisfied:
\begin{enumerate}[nosep]
  \item  if $\tau'\cdot(i,a)$ is a prefix of $\tau$ then $i$ is not active in $\tau'$,

  \item  if $\tau'\cdot\forget{i}$ is a prefix of $\tau$ then $i$ is active in $\tau'$,

  \item if $\tau'\cdot\add{i,j}{\gamma}$ is a prefix of $\tau$ then $i,j$ are
  active in $\tau'$ and the edge labeled $\gamma$ was not already added
  in $\tau'$ between $\chi_{\tau'}(i)$ and $\chi_{\tau'}(j)$.
\end{enumerate}
In the following, a well-formed word over $\Omega_k$ is called a \kPT.
The set $\kWords\subseteq\Omega_k^{*}$ of \kPTs is clearly regular.

\begin{lemma}\label{lem:path-term-and-decomposition}
  \begin{enumerate}
    \item Given a path decomposition $V_1,\ldots,V_N$ of width at most $k$ of a
    $(\Gamma,\Sigma)$-graph $G$, we can construct in linear time wrt.\ $|G|$ a
    \kPT $\tau$ such that $\wsem{\tau}=(G,\emptyset)$.

    \item Given a $\kPT$ $\tau$, we can construct a path decomposition of width
    at most $k$ of the graph $G_\tau$ defined by $\tau$: 
    $\wsem{\tau}=(G_\tau,\chi_\tau)$.
  \end{enumerate}
\end{lemma}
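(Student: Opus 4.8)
The two directions are essentially inverse translations between path decompositions and \kPTs, so I would prove them separately but with a common "bag = active colors" dictionary in mind.

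For part (1), the plan is to process the path decomposition $V_1,\ldots,V_N$ from left to right, building $\tau$ incrementally. The invariant I would maintain is that after processing $V_i$, the active colors of the current \kPT correspond bijectively (via $\chi$) to the vertices of $V_i$, and all edges among vertices that have already appeared in some $V_j$ with $j\le i$ have been added. Since $|V_i|\le k+1$, the colors $[k]=\{0,\ldots,k\}$ suffice to color a whole bag. Moving from $V_{i-1}$ to $V_i$: first append $\forget{c}$ for every color $c$ assigned to a vertex in $V_{i-1}\setminus V_i$ (these vertices never reappear, by the connectivity condition~3 of path decompositions, so forgetting is safe); then for each vertex $v\in V_i\setminus V_{i-1}$, pick an unused color $c$ and append $(c,\lambda(v))$; then for every edge $(u,v)$ with both endpoints in $V_i$ that has not yet been added, append $\add{c_u,c_v}{\gamma}$. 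The well-formedness conditions (1)–(3) for \kPTs follow directly from the invariant. Condition~2 of path decompositions guarantees every edge has both endpoints together in some bag, so every edge gets added exactly once; condition~1 guarantees every vertex is created. The construction is linear in $|G|$ because each vertex is created once, forgotten once, and each edge is added once, with $O(k)$ bookkeeping per step (a set of available colors).

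For part (2), given a \kPT $\tau$, I would read off the path decomposition directly from the semantics. Write $\tau=\omega_1\cdots\omega_n$ and for each prefix $\tau_i=\omega_1\cdots\omega_i$ let $\wsem{\tau_i}=(V_i^{G},\ldots,\chi_i)$; set $B_i$ to be the set of vertices currently colored, i.e.\ the image of $\chi_i$. Then take the subsequence of nonempty $B_i$'s as the path decomposition of $G_\tau$. Condition~3 (the "interval" property) holds because a vertex $v$ enters a bag exactly when its $(c,a)$ letter is processed and leaves exactly at the matching $\forget{c}$ (or never leaves), so the set of indices $i$ with $v\in B_i$ is a contiguous interval. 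Condition~1 holds because every vertex of $G_\tau$ is created by some $(c,a)$ letter and is in the bag right after. Condition~2 holds because an edge is only added by $\add{i,j}{\gamma}$ when colors $i,j$ are both active, i.e.\ both endpoints are in the current bag. Finally $|B_i|\le k+1$ since $\chi_i$ is a partial function on $[k]=\{0,\ldots,k\}$, so the width is at most $k$.

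The main obstacle is part (1): one must be careful that the greedy color-reuse never runs out of colors and that \emph{all} edges get added. The color bound is the subtle point — a vertex in $V_{i-1}\setminus V_i$ must be forgotten \emph{before} new vertices of $V_i\setminus V_{i-1}$ grab colors, and one needs $|V_{i-1}\cup V_i|$ to not exceed $k+1$ in general, which is \emph{false}; the fix is precisely to forget first, so that at the moment of creating new vertices the active colors are those of $V_{i-1}\cap V_i$, and $|(V_{i-1}\cap V_i)\cup (V_i\setminus V_{i-1})| = |V_i|\le k+1$. The edge-coverage argument needs the observation that if $(u,v)\in E_\gamma$ then $u,v\in V_i$ for some $i$, and one adds the edge at the first such $i$ in processing order; I would track added edges with a flag per edge to keep the construction linear and to ensure well-formedness condition~3 (no edge added twice).
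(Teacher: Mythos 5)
Your proposal is correct and follows essentially the same route as the paper: part (1) by a left-to-right induction over the bags with the invariant that the active colors are exactly the current bag (forget the vertices of $V_{i-1}\setminus V_i$, create the new vertices using the $|V_i|\le k+1$ colour count, add the edges inside the bag, each exactly once), and part (2) by taking the images of the colouring after each prefix as the bags and checking the three conditions exactly as the paper does. The only cosmetic omission is that after processing the last bag you should still append $\forget{i}$ for every remaining active colour $i$ so that the final term satisfies $\wsem{\tau}=(G,\emptyset)$ as the lemma requires.
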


\begin{proof}
  \textbf{1.}
  We construct by induction a sequence of \kPTs $\tau_\ell$ for $0\leq\ell\leq
  N$ such that $\wsem{\tau_\ell}=(G_\ell,\chi_\ell)$ where $G_\ell$ is the
  subgraph of $G=(V,(E_\gamma)_{\gamma\in\Gamma},\lambda)$ induced by the
  vertices $V_1\cup\cdots\cup V_\ell$, and $\chi_\ell([k])=V_\ell\cap
  V_{\ell+1}$ (with $V_0=V_{N+1}=\emptyset$).  We let $\tau_0=\epsilon$.
  
  Let now $0\leq\ell<N$ and assume that $\tau_{\ell}$ has been constructed.  Let
  $C_\ell=\dom(\chi_\ell)\subseteq[k]$ be the active colors in $\tau_\ell$.  By
  induction, we know that $|C_\ell|=|V_{\ell+1}\cap V_\ell|$.  Let
  $V_{\ell+1}\setminus V_\ell=\{u_1,\ldots,u_m\}$.  Since the decomposition is
  of width at most $k$, we have $|V_{\ell+1}|\leq1+k$ and we find
  $i_1<\cdots<i_m$ available colors in $[k]\setminus C_\ell$.  We define
  $\tau'_{\ell+1}=\tau_\ell\cdot(i_1,\lambda(u_1))\cdots(i_m,\lambda(u_m))$.
  Let $D=\{i_1,\ldots,i_m\}$ and let $\wsem{\tau'_{\ell+1}}=(G',\chi')$.  We have
  $\dom(\chi')=C_\ell\cup D$, $\chi'(C_\ell)=V_{\ell+1}\cap V_{\ell}$ and
  $\chi'(D)=V_{\ell+1}\setminus V_\ell$.
  For each $\gamma\in\Gamma$, $i\in C_\ell\cup D$ and $j\in D$ such that 
  $(\chi'(i),\chi'(j))\in E_\gamma$ (resp.\ $(\chi'(j),\chi'(i))\in E_\gamma$),
  we append $\add{i,j}{\gamma}$ (resp.\ $\add{j,i}{\gamma}$) to the word
  $\tau'_{\ell+1}$.  We obtain a \kPT $\tau''_{\ell+1}$ which defines the
  subgraph $G_{\ell+1}$ of $G$ induced by $V_1\cup\cdots\cup V_{\ell+1}$.
  Notice that, from the third condition of a path decomposition, we have
  $V_{\ell+1}\setminus V_\ell= V_{\ell+1}\setminus(V_1\cup\cdots\cup V_\ell)$
  and the edges in $G_{\ell+1}$ which were not already in $G_\ell$ are between
  some vertex in $V_{\ell+1}\setminus V_\ell$ and some vertex in $V_{\ell+1}$.
  Finally, for each $i\in C_\ell\cup D$ such that $\chi'(i)\notin V_{\ell+2}$,
  we append $\forget{i}$ to the word $\tau''_{\ell+1}$.  We obtain the \kPT
  $\tau_{\ell+1}$ satisfying our invariant.
  
  Finally, from the invariant we deduce that $\wsem{\tau_N}=(G,\emptyset)$,
  which concludes the first part of the proof.
  
  \medskip\noindent
  \textbf{2.}
  Let $\tau$ be a \kPT and $n=|\tau|$ be its length.  For $0\leq\ell\leq n$, let
  $\tau_\ell$ be the prefix of $\tau$ of length $\ell$.  Let
  $\wsem{\tau_\ell}=(G_\ell,\chi_\ell)$ and $V_\ell=\chi_\ell([k])$ be the subset
  of vertices which are colored in $\wsem{\tau_\ell}$.  We show that
  $V_1,\ldots,V_n$ is a path decomposition of
  $G=G_n=(V,(E_\gamma)_{\gamma\in\Gamma},\lambda)$.
  
  Let $u\in V$ be a vertex of $G$.  
  For some $1\leq\ell\leq n$, we have $\tau_{\ell}=\tau_{\ell-1}\cdot(i,a)$ with
  $\chi_\ell(i)=u\in V_\ell$.  This proves that the first condition of a path
  decomposition is satisfied.
  
  Let $(u,v)\in E_\gamma$ for some $\gamma\in\Gamma$. For some $1<\ell<n$, we 
  have $\tau_{\ell+1}=\tau_\ell\cdot\add{i,j}{\gamma}$ with $\chi_\ell(i)=u$ 
  and $\chi_\ell(j)=v$. We deduce that $u,v\in V_\ell$, which proves that the 
  second condition of a path decomposition is satisfied.
  
  For the third condition, let $1\leq i\leq j\leq m\leq n$ and $u\in V_i\cap 
  V_m$. We deduce that for some $\ell\in[k]$, we have 
  $u=\chi_i(\ell)=\chi_m(\ell)$ and that color $\ell$ was not forgotten between 
  $\tau_i$ and $\tau_m$. Therefore, $u=\chi_j(\ell)\in V_j$ as desired.  
\end{proof}

\subparagraph{Existentially bounded graphs.}
Another characterization of bounded path-width is the notion of
\emph{existentially-bounded} graphs~\cite{Aiswarya-PhD14,BolligGastin-mpri2019}. 
Let $G=(V,(E_\gamma)_{\gamma\in\Gamma},\lambda)$ be a $(\Gamma,\Sigma)$ graph
and $k>0$ an integer.  The graph $G$ is \emph{existentially $k$-bounded}
($\exists k$-bounded) if there is a linear order $<$ on the vertices of $G$ such
that for all $v\in V$, the number of vertices $u\leq v$ connected to some
vertices $w>v$ is at most $k$:
\begin{align}
  &|\{u\in V\mid u\leq v \text{ and } 
  (u,w)\in\textstyle\bigcup_{\gamma\in\Gamma}E_\gamma\cup E_\gamma^{-1}
  \text{ for some } w>v\}|\leq k \,.
  \label{eq:k-bounded}
\end{align}
A linear order $<$ satisfying \eqref{eq:k-bounded} is called 
$k$-bounded\footnote{Notice that the bound is on the number of vertices and not 
on the number of edges $(u,w)$ crossing over $v$, i.e., with $u\leq v<w$. But 
when the graph has bounded degree, the  bound on the number of vertices induces 
a bound on the number of crossing edges.}.

Words are $\exists 1$-bounded.  Mazurkiewicz traces, with the process based
representation, are $\exists K$-bounded where $K$ is the number of processes.
Message sequence charts (MSCs) are not existentially bounded in general.  But
existentially bounded MSCs is a fundamental well-behaved under-approximation for
message passing automata with unbounded FIFO channels.

\begin{example}\label{ex:grid-exists-bounded}
  An $m\times n$ grid is existentially $\min(m,n)$-bounded.  Indeed, the set of
  vertices of an $m\times n$ grid is $V=\{1,\ldots,m\}\times\{1,\ldots,n\}$.  We
  have horizontal and vertical edges:
  \begin{align*}
    E_\to &= \{((i,j),(i,j+1))\mid 1\leq i\leq m, 1\leq j<n\} \\
    E_\downarrow &= \{((i,j),(i+1,j))\mid 1\leq i<m, 1\leq j\leq n\} 
  \end{align*}
  Assuming that $n\leq m$, we define a linear order on $V$ by listing the first
  row, then the second row, etc.
  $$
  (1,1)<\cdots<(1,n)<(2,1)<\cdots<(2,n)<\cdots<(m,1)<\cdots<(m,n) \,.
  $$
  It is easy to check that this linear order is $n$-bounded.
  If $m<n$ then we list the vertices column by column.
\end{example}

\begin{lemma}\label{lem:exists-k-bounded}
  A graph $G$ is $\exists k$-bounded if and only if its path-width is at most $k$.  
\end{lemma}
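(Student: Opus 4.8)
The statement is an equivalence between $\exists k$-boundedness and having path-width at most $k$. I would prove the two directions separately, and in fact I would try to link both of them to the word-encoding $\kWords$ introduced just before, since Lemma~\ref{lem:path-term-and-decomposition} already converts freely between path decompositions and \kPTs. So concretely, the plan is: (i) from a $k$-bounded linear order, build a path decomposition of width at most $k$ directly; (ii) from a path decomposition of width at most $k$, extract a $k$-bounded linear order. Using Lemma~\ref{lem:path-term-and-decomposition} I could alternatively phrase direction (ii) via a \kPT, but the direct argument is cleaner.

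For the forward direction, suppose $<$ is a $k$-bounded linear order on $V=\{v_1<v_2<\cdots<v_N\}$. For each $i$, define $V_i$ to be $\{v_i\}$ together with the set of vertices $u\le v_i$ that are connected to some $w>v_i$ — i.e.\ the "boundary" set measured by \eqref{eq:k-bounded}, plus the current vertex. The $k$-boundedness condition gives $|V_i|\le k+1$, so the width is at most $k$. I then need to check the three path-decomposition axioms: every vertex appears (it appears at least in $V_i$ when $v=v_i$); every edge $(u,w)$ with, say, $u<w=v_i$ lies in $V_i$ because $u$ is counted in the boundary at step $i$; and the interval/connectedness condition, which is the one requiring a small argument — if $u\in V_i\cap V_m$ with $i\le m$, then $u$ is a vertex $\le v_i$ connected to something $> v_i$, and $u\le v_m$ connected to something $>v_m$, and one must see that for any intermediate $j$ with $i\le j\le m$, $u$ is still a boundary vertex at $v_j$; this follows because $u$ has a neighbour $w>v_m\ge v_j$, and if $u=v_j$ it is trivially in $V_j$, otherwise $u<v_j$ and $u$ is a vertex $\le v_j$ with a neighbour $>v_j$.

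For the converse, given a path decomposition $V_1,\ldots,V_N$ of width $\le k$, I would define the linear order by listing vertices in order of first appearance, breaking ties within a single bag arbitrarily: $u<u'$ if $\min\{i:u\in V_i\}<\min\{i:u'\in V_i\}$, with an arbitrary fixed order inside each bag. Then for a vertex $v$ with first bag index $p$, I claim every $u\le v$ that has a neighbour $w>v$ must lie in $V_p$: indeed $u$ appears no later than $p$ (if $u<v$) or $u=v\in V_p$; and $w$ appears strictly after $v$, hence in some bag $V_q$ with $q>p$ (using that $w>v$ forces the first index of $w$ to be $>p$, or $=p$ but later in the tie order — here I need to be slightly careful and may instead use the edge $\{u,w\}$: since $(u,w)$ or $(w,u)$ is an edge, axiom 2 gives a bag containing both, and together with axiom 3 and $u\in V_a$ for some $a\le p$, $w\in V_b$ for some $b$, one deduces $u\in V_p$). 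Counting such $u$ gives at most $|V_p|\le k+1$; but $v$ itself is one of them only in a degenerate sense, so the count of $u\ne v$ contributing to \eqref{eq:k-bounded} is $\le k$ — I should double-check whether the displayed inequality counts $v$ among the $u$'s (it writes $u\le v$, so it does, but then it also needs a neighbour $w>v$, so $v$ counts only if $v$ itself has a later neighbour, and in that case $|V_p|\le k+1$ still bounds it, giving $\le k+1$; a moment's thought about whether the intended bound is $k$ or $k+1$ is the one real subtlety, and I expect the resolution is that the boundary at $v$ excluding $v$ embeds into $V_{p}\setminus\{\text{something}\}$ or into an adjacent bag, yielding exactly $k$).

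The main obstacle is this bookkeeping of off-by-one constants and the handling of ties within a bag in the converse direction — making sure the linear order is well-defined and that the "boundary at $v$" genuinely injects into a single bag of size $k+1$. The set-theoretic core (axioms 1–3 in one direction, the crossing count in the other) is routine once the right bag $V_i=\{v_i\}\cup\{\text{boundary}\}$ and the right order (first-appearance order) are chosen; the connectedness axiom~3 is the step I would write most carefully.
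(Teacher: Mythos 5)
Your plan is a genuinely different route from the paper's: the paper proves both directions by translating between $k$-bounded linear orders and the word encoding (\kPTs) of path decompositions, whereas you argue directly on path decompositions. The plan is viable, but as written it has two problems. First, in the forward direction your bags are off by one: with $V_i=\{v_i\}\cup\{u\le v_i\mid u \text{ adjacent to some } w>v_i\}$, an edge $(u,v_i)$ where $u$ has no neighbour strictly after $v_i$ lies in \emph{no} bag --- already for a single edge $a<b$ your bags are $\{a\}$ and $\{b\}$ --- so your justification of axiom~2 (``$u$ is counted in the boundary at step $i$'') is incorrect, since membership in that boundary requires a neighbour strictly beyond $v_i$, not $v_i$ itself. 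The fix is small: take $V_i=\{v_i\}\cup\{u<v_i\mid u \text{ adjacent to some } w\ge v_i\}$, i.e.\ the boundary measured at $v_{i-1}$; this still has size at most $k+1$ by \eqref{eq:k-bounded}, and axioms~1--3 then go through essentially as you sketch.

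The more serious issue is the converse, where you only establish that the crossing set $U_v=\{u\le v\mid u\text{ has a neighbour }w>v\}$ is contained in the first bag $V_p$ of $v$, giving $|U_v|\le k+1$, and you explicitly leave the improvement to $k$ open (``I expect the resolution is\ldots''). That improvement is exactly the non-trivial content of this direction, so this is a genuine gap; as written your argument only shows the graph is $\exists(k{+}1)$-bounded. A correct completion along your lines: if $|U_v|=k+1$ then $U_v=V_p$, so every $u\in V_p$ satisfies $u\le v$ and has a neighbour $w_u>v$; the first-appearance index $q_u$ of $w_u$ must exceed $p$ (index $<p$ would give $w_u<v$, and index $=p$ would put $w_u\in V_p=U_v$, hence $w_u\le v$, a contradiction). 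The bag witnessing the edge $\{u,w_u\}$ has index at least $q_u$, so by axiom~3 every $u\in V_p$ also lies in $V_q$ where $q=\min_u q_u>p$; but $V_q$ moreover contains the new neighbour attaining $q$, which is not in $V_p$, so $|V_q|\ge k+2$, contradicting width at most $k$. Note that the paper's proof meets the same subtlety (the $k+1$ active colours right after $v$ is created) and resolves it by passing to the configuration just after the next Forget operation; some argument of this kind is indispensable and is what your proposal is missing.
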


\begin{proof}
  Let $G=(V,(E_\gamma)_{\gamma\in\Gamma},\lambda)$ be a $(\Gamma,\Sigma)$ graph
  and let $\tau$ be a \kPT with $\wsem{\tau}=(G,\chi)$.  Let $<$ be the linear
  order on $V$ corresponding to the order in which the vertices of $G$ are
  created by $\tau$.  We show that $<$ is $k$-bounded.  Let $v\in V$ be a
  vertex.  We have to show that there are at most $k$ vertices $u\leq v$ which
  are connected to some vertex $w>v$.  This is obvious if $v$ is one of the
  first $k$ vertices wrt.\ the linear order $<$.  Let $\tau'$ be the prefix of
  $\tau$ which ends in the creation of $v$: $\tau'=\tau''\cdot(\ell,a)$,
  $\wsem{\tau'}=(G',\chi')$ and $\chi'(\ell)=v$.  Let $u\leq v<w$ with $(u,w)\in
  E_\gamma\cup E^{-1}_\gamma$ for some $\gamma\in\Gamma$.  Let $i,j$ be the
  colors of $u$ and $w$ when they were respectively added.  The edge between $u$
  and $w$ was added with $\add{i,j}{\gamma}$ or $\add{j,i}{\gamma}$ after the
  creation of $w$, hence after the prefix $\tau'$ of $\tau$.  We deduce that
  color $i$ was not forgotten at $\tau'$ and $i\in\dom(\chi')$.  Therefore, the
  set $U$ of vertices $u\leq v$ which are connected to some vertex $w>v$ is
  contained in $\chi'([k])$.  If $|\dom(\chi')|\leq k$ then $|U|\leq k$ and we
  are done.  Otherwise, we have $\dom(\chi')=[k]$.  Assume $U\neq\emptyset$ and
  let $u\leq v<w$ with $(u,w)\in E_\gamma\cup E^{-1}_\gamma$ for some
  $\gamma\in\Gamma$.  Let $j$ be the color of $w$ when it was created.  Since
  $j\in\dom(\chi')$, we must see $\forget{j}$ between the creation of $v$ at
  $\tau'$ and the creation of $w$.  Let $\tau''$ be the least prefix of $\tau$
  which ends with some $\forget{m}$ operation and such that $\tau'$ is a prefix
  of $\tau''$.  With $\wsem{\tau''}=(G'',\chi'')$, we have
  $\dom(\chi'')=[k]\setminus\{m\}$ and $\chi''([k])$ contains exactly $k$
  vertices.  As above, we can show that $U\subseteq\chi''([k])$, which concludes
  this direction of the proof.
  
  \medskip
  Let $G=(V,(E_\gamma)_{\gamma\in\Gamma},\lambda)$ be a $(\Gamma,\Sigma)$ graph
  and let $<$ be a linear order on $V$ which is $k$-bounded.  We assume that
  $V=\{v_1,\ldots,v_n\}$ with $v_1<\cdots< v_n$.  For each $0\leq \ell\leq n$, we
  construct by induction a \kPT $\tau_\ell$ which describes the subgraph of
  $G$ induced by the vertices $\{v_1,\ldots,v_\ell\}$, keeping colors on vertices
  $v_i$ which are still missing some edges, i.e.,
  $(v_i,v_j)\in\bigcup_{\gamma\in\Gamma}E_\gamma\cup E_\gamma^{-1}$ with $1\leq
  i\leq \ell<j\leq n$. We start with $\tau_0=\varepsilon$.
  
  Now, let $1\leq\ell\leq n$ and assume that $\tau_{\ell-1}$ is already defined.
  Let $C_{\ell-1}\subseteq[k]=\{0,1,\ldots,k\}$ be the set of active colors in
  $\tau_{\ell-1}$.  Since the linear order $<$ is $k$-bounded, we have
  $|C_{\ell-1}|\leq k$ and we let $i=\min([k]\setminus C_{\ell-1})$.  We add the
  new vertex $v_\ell$ by considering
  $\tau'_\ell=\tau_{\ell-1}\cdot(i,\lambda(v_\ell))$.  Then, we obtain
  $\tau''_\ell$ by adding all edges connecting $v_\ell$ with earlier vertices.
  Assume that $(v_m,v_\ell)\in E_\gamma$ (resp.\ $(v_\ell,v_m)\in E_\gamma$) for
  some $1\leq m<\ell$ and $\gamma\in\Gamma$.  Then, from our invariant, $v_m$
  has some color $j\in C_{\ell-1}$ in $\tau_{\ell-1}$ (hence also in
  $\tau'_\ell$).  We append to the current word $\add{j,i}{\gamma}$ (resp.\
  $\add{i,j}{\gamma}$).  Finally, we obtain $\tau_\ell$ from $\tau''_\ell$ by
  forgetting colors of completed vertices.  If $v_m$ is the vertex corresponding
  to some active color $j\in C_{\ell-1}\cup\{i\}$ and $v_m$ is connected only to
  vertices in $\{v_1,\ldots,v_\ell\}$, then we append $\forget{j}$ to the
  current word.  Notice that the set $C_\ell$ of active colors in $\tau_\ell$
  satisfies $C_\ell\subseteq C_{\ell-1}\cup\{i\}$ and $|C_\ell|\leq k$ since $<$
  is $k$-bounded.
  
  Finally, we have $\wsem{\tau_n}=(G,\emptyset)$, which completes the proof.
  Notice that the construction of the $\kPT$ $\tau$ from the $k$-bounded
  linearization $<$ takes linear time.
\end{proof}

\subparagraph{Solving the evaluation problem in polynomial time.}
The problem \kPW-\feval\ is to compute $\sem{\wts}(G)$, given a WTS $\wts$ and a
$(\Gamma,\Sigma)$-graph $G$ of path-width at most $k$.

\begin{theorem}\label{thm:kPW-feval}
  The problem \kPW-\feval\ can be solved in linear time wrt.\ the input graph
  $G$ and polynomial time wrt.\ the input WTS $\wts$.
\end{theorem}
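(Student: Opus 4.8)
The plan is to reduce the evaluation problem over $k$-path-width graphs to a weighted automaton evaluation problem over the regular language $\kWords$ of \kPTs. By Lemma~\ref{lem:path-term-and-decomposition}, from the input graph $G$ (given with a path decomposition of width at most $k$, or else one is computed) we construct in linear time a \kPT $\tau$ with $\wsem{\tau}=(G,\emptyset)$. So it suffices to evaluate $\sem{\wts}$ on the graph described by a word $\tau\in\kWords$, and the whole task becomes a dynamic-programming sweep along $\tau$ of length $\mathcal{O}(|G|)$.

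The key idea is to track, as we read a prefix $\tau_\ell$ of $\tau$ with $\wsem{\tau_\ell}=(G_\ell,\chi_\ell)$, a \emph{partial-run summary}: a map assigning to each function $g\colon \dom(\chi_\ell)\to\States$ the semiring value $\bigoplus_\run \bigotimes_{v} \Weight(\tile_\run(v))$, where the sum ranges over labelings $\run$ of $G_\ell$ that (i) agree with $g$ on the colored (``active'') vertices and (ii) are \emph{locally consistent at every already-completed vertex}, i.e.\ every vertex all of whose incident edges have been added is required to carry a legal tile, while colored vertices may still be missing edges and so are only constrained partially. Crucially, the number of active colors is at most $k+1$, so this table has at most $|\States|^{k+1}$ entries, each an element of $\SR$; since $k$ is fixed, this is polynomial in $\wts$. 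I would then give the update rules for each of the three letter types. For $(i,a)$: extend every summand by a fresh vertex colored $i$ labeled $a$, multiplying in nothing yet (its tile is not yet determined), and branch over its state — so the table gets a new argument coordinate ranging over $\States$, with the old value copied. For $\add{i,j}{\gamma}$: the new edge may \emph{complete} vertex $\chi(i)$ and/or $\chi(j)$; whenever a vertex becomes complete, its full neighbourhood-tile is now fixed by $g$ together with the states of its (already-created, hence colored-at-creation, but possibly since-forgotten) neighbours — wait, here is the subtlety, which I address below. For $\forget{i}$: project out coordinate $i$ by summing the values over the $\States$-choices for that color, but only \emph{after} ensuring vertex $\chi(i)$ is complete (well-formedness of a path decomposition, condition 2, guarantees all its edges were added before it is forgotten, so this is sound) and multiplying in its tile weight.

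The main obstacle — and the point that needs care — is that a tile $\tile_\run(v)=(\fin,\run(v),\lambda(v),\fout)$ refers to the states of \emph{all} neighbours of $v$, but a neighbour $u$ of $v$ may have been forgotten before $v$'s last edge is added, so $u$'s state is no longer in the current table's domain. To handle this I would \emph{not} wait until $\forget{}$ to charge a vertex's tile; instead, I charge (multiply in) each tile the moment the vertex becomes \emph{complete}, i.e.\ right after the $\add{}$ that inserts its last incident edge. At that moment all of $v$'s neighbours already exist; a neighbour $u$ is still active (colored) iff $u$ is itself not yet complete — but if $u$ were complete and forgotten, the edge between $u$ and $v$ would have been added before $u$ was forgotten, hence before now, so $v$ would already be complete, contradiction. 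Hence at the instant $v$ becomes complete, every neighbour of $v$ is still colored, so $g$ supplies all the state information needed to read off $\tile_\run(v)$, check it lies in $\Tiles_{\typemap(v)}$ (discarding the summand, i.e.\ contributing $\szero$, if not), and multiply its weight into the running value. The $\forget{i}$ step then only needs to project out the color. One must also double-check the \emph{types}: $\typemap(v)$ is determined by the \emph{final} set of incident edges of $v$ in $G$, which is exactly the set present once $v$ is complete, so reading the type at completion-time is correct. Everything else is bookkeeping.

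Wrapping up: the per-letter update does $\mathcal{O}(|\States|^{k+1}\cdot|\States|)$ semiring operations (a bounded number of table rewrites, each touching one or two coordinates), so the total is linear in $|\tau|=\mathcal{O}(|G|)$ and polynomial in $|\States|\le|\wts|$, with semiring operations on values whose size stays polynomially bounded (weights in binary, products of $\mathcal{O}(|G|)$ of them). After the last letter, $\tau\in\kWords$ with $\wsem{\tau}=(G,\emptyset)$ means no colors are active and every vertex is complete; the table has a single entry indexed by $\fempty$, whose value is exactly $\bigoplus_{\run}\Weight(\run)=\sem{\wts}(G)$. This gives the claimed linear-in-$G$, polynomial-in-$\wts$ bound and proves Theorem~\ref{thm:kPW-feval}.
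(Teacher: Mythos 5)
Your overall architecture is the same as the paper's: compile $G$ into a \kPT\ $\tau$ via Bodlaender's algorithm and Lemma~\ref{lem:path-term-and-decomposition}, then make a single left-to-right sweep over $\tau$ maintaining a table of semiring values whose size is polynomial in $\wts$ for fixed $k$ (the paper packages this sweep as a weighted word automaton $\Bc_k$, Lemmas~\ref{lem:WTS-2-WWA-k} and~\ref{lem:weighted-word-automaton-evaluation}). The gap is in the step where you charge tile weights. Your central claim --- ``at the instant $v$ becomes complete, every neighbour of $v$ is still colored'' --- is false, and the argument you give for it conflates ``the edge between $u$ and $v$ was added before now'' with ``all edges of $v$ were added before now.'' Concretely, with $k=1$ take the three-vertex path $v_1\to v_2\to v_3$ and
$\tau=(0,a)\,(1,a)\,\add{0,1}{\gamma}\,\forget{0}\,(0,a)\,\add{1,0}{\gamma}\,\forget{1}\,\forget{0}$:
the vertex $v_2$ becomes complete only when $\add{1,0}{\gamma}$ is read, and at that moment its neighbour $v_1$ has already been forgotten. (This word is exactly what the paper's own construction produces from the decomposition $V_1=\{v_1,v_2\}$, $V_2=\{v_2,v_3\}$, so you cannot appeal to a special shape of the \kPT.) Consequently your table, indexed only by maps $g\colon\dom(\chi_\ell)\to\States$, no longer contains the state of $v_1$ when you need to read off $\tile_\run(v_2)$, so the tile of $v_2$, its membership in $\Tiles_{\typemap(v_2)}$, and its weight cannot be determined at completion time. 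This is precisely the subtlety you flagged, but your proposed resolution does not resolve it.

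The repair is to enrich the per-color information from a single state of $\wts$ to a \emph{partial tile}: for each active color $i$, record $(\fin(i),q(i),a(i),\fout(i))$, where $\fin(i)$ and $\fout(i)$ accumulate, as each $\add{i,j}{\gamma}$ is read, the guessed states of the neighbours already attached to $\chi(i)$ --- including neighbours that are subsequently forgotten. Then the tile of a vertex is fully determined (and equal to its final tile, since no edge can ever be added to an uncolored vertex) at the moment its own color is forgotten, and that is where the weight is multiplied in. This is exactly the state space and transition table of the paper's automaton $\Bc_k$ (partial maps $[k]\to\Tiles$, weight $\Weight(\delta(i))$ on $\forget{i}$, Table~\ref{tbl:transitions-word-B}); it also removes the need for your sweep to consult $G$ to detect ``completion,'' keeping it a genuine weighted automaton on $\kWords$. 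The table grows from $|\States|^{k+1}$ to roughly $(1+|\Tiles|)^{k+1}$ entries, still polynomial in $\wts$ for fixed $k$, so the claimed complexity is unaffected; but as written, your proof does not go through.
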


\begin{proof}
  The evaluation algorithm for bounded path-width graphs proceeds in three steps:
  \begin{enumerate}[nosep]
    \item  From the input graph $G$, which is assumed to be of path-width at most 
    $k$, we compute in linear time a path decomposition $V_1,\ldots,V_n$ using 
    Bodlaender's algorithm \cite{Bodlaender_1996}. Then, using 
    Lemma~\ref{lem:path-term-and-decomposition}, we compute in linear time a \kPT 
    $\tau$ such that $\wsem{\tau}=(G,\emptyset)$.
  
    \item By Lemma~\ref{lem:WTS-2-WWA-k} below, we construct in time polynomial
    in $\wts$ a weighted word automaton $\Bc_k$ which is equivalent to $\wts$ on
    graphs of path-width at most $k$.  
    Equivalent means that for all \kPTs $\tau$ with
    $\wsem{\tau}=(G,\emptyset)$, we have $\sem{\wts}(G)=\sem{\Bc_k}(\tau)$.
  
    \item We compute $\sem{\Bc_k}(\tau)$.  It is well-known that the value of a
    weighted word automaton $\Bc$ on a given word $w$ can be computed in time
    $\mathcal{O}(|\Bc|\cdot|w|)$ assuming that sum and product in the semiring
    take constant time. 
    For the sake of completeness, we give details in 
    Lemma~\ref{lem:weighted-word-automaton-evaluation}. Alternatively, we may 
    use Algorithm~\ref{alg:tree-eval} which achieves the same complexity in the 
    more general case of weighted tree automata.
    \qedhere
  \end{enumerate}
\end{proof}

A weighted word automaton over alphabet $\Sigma$ is usually given as a tuple
$\Bc=(Q,T,I,F,\Weight)$ where $I,F\subseteq Q$ are the subsets of initial
and final states, $T\subseteq Q\times\Sigma\times Q$ defines the transitions
and $\Weight\colon T\to S$ gives weights to transitions.  This is an
equivalent representation of a WTS over $(\{\to\},\Sigma)$.

\begin{lemma}\label{lem:WTS-2-WWA-k}
  Given a WTS $\wts$ over $(\Edgenames,\Nodelabels)$-graphs and $k>0$, we can
  compute in polynomial time wrt.\ $\wts$, a weighted word automaton $\Bc_k$ which
  is equivalent to $\wts$ over graphs of path width at most $k$.
  That is, for all \kPTs $\tau$ with $\wsem{\tau}=(G,\emptyset)$, we
  have $\sem{\wts}(G)=\sem{\Bc_k}(\tau)$.
\end{lemma}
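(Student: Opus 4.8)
Here is how I would prove Lemma~\ref{lem:WTS-2-WWA-k}.

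\medskip

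The plan is to have $\Bc_k$ read the \kPT $\tau$ letter by letter, simulating the incremental construction of $G$ while simultaneously guessing a run $\run$ of $\wts$ and accumulating $\Weight(\run)$. The key observation is that, at a prefix $\tau'$ of $\tau$ with $\wsem{\tau'}=(G',\chi')$, the only part of the guessed run that is still relevant for the future is the data attached to the currently coloured vertices: a coloured vertex $v=\chi'(i)$ may still acquire incident edges, and its tile $\tile_\run(v)$ will only ever be inspected at the moment colour $i$ is forgotten; moreover, since every $\add{i,j}{\gamma}$ step requires both its endpoints to be active and a vertex keeps its colour from creation until it is forgotten, once $i$ is forgotten $v$ has already received \emph{all} of its incident edges, so at that moment the recorded data is exactly the complete tile $\tile_\run(v)$ and its type is $\typemap(v)$. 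Accordingly, a state of $\Bc_k$ is a partial map from $[k]$ to the set $P$ of \emph{partial tiles}, where a partial tile is a tuple $(\fin,q,a,\fout)\in\States^{A}\times\States\times\Sigma\times\States^{B}$ for some $A,B\subseteq\Gamma$ recording the guessed state $q$ of a vertex, its label $a$, and, for each edge name already added incident to it, whether that edge is incoming or outgoing together with the guessed state of its other endpoint. The unique initial and unique final state of $\Bc_k$ is the empty map.

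The transitions follow the semantics of $\Omega_k$; since $\tau$ is well-formed, the side-conditions on letters always hold, so $\Bc_k$ can only ever get stuck through a failed membership test. On a letter $(i,a)$: nondeterministically pick $q\in\States$ and attach to colour $i$ the partial tile $(\fempty,q,a,\fempty)$, with weight $\sone$. On a letter $\add{i,j}{\gamma}$: read off the states $q_i,q_j$ stored at colours $i$ and $j$, record $\gamma$ as an outgoing edge of $i$ towards state $q_j$ and as an incoming edge of $j$ from state $q_i$ (merging both updates on colour $i$ when $i=j$), with weight $\sone$. On a letter $\forget{i}$: let $t$ be the partial tile stored at colour $i$; by the observation above $t$ is a genuine tile of type $\typemap(\chi'(i))$, so allow this transition precisely when $t\in\Tiles_{\typemap(\chi'(i))}$, in which case drop colour $i$ and give the transition weight $\Weight(t)$.

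For correctness I would establish a weight-preserving bijection between the accepting runs of $\Bc_k$ on a \kPT $\tau$ with $\wsem{\tau}=(G,\emptyset)$ and the runs of $\wts$ on $G$. In one direction, a run $\run$ of $\wts$ on $G$ forces a unique run of $\Bc_k$: after each prefix $\tau'$ the reached state is $i\mapsto$ (partial tile of $\chi_{\tau'}(i)$ induced by $\run$ and by the edges already present in $\wsem{\tau'}$); every letter of the forms $(i,a)$ and $\add{i,j}{\gamma}$ is available, and every $\forget{i}$-transition is available because $\run$ is a run and hence $\tile_\run(\chi'(i))\in\Tiles$; since $\wsem{\tau}=(G,\emptyset)$, each vertex is forgotten exactly once, so the product of the transition weights is $\bigotimes_{v\in V}\Weight(\tile_\run(v))=\Weight(\run)$. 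Conversely, a run of $\Bc_k$ fixes the guessed state of each vertex at the time of its creation, hence a labelling $\run\colon V\to\States$; availability of every $\forget{i}$-transition says exactly that $\tile_\run(v)\in\Tiles_{\typemap(v)}$ for all $v$, so $\run$ is a run, and the two constructions are mutually inverse and weight-preserving. Summing over runs gives $\sem{\Bc_k}(\tau)=\bigoplus_{\run}\Weight(\run)=\sem{\wts}(G)$, as required, and this holds for every \kPT $\tau$ building $G$.

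For the complexity claim: with $k$ fixed, $\Bc_k$ has $(|P|+1)^{k+1}$ states and a number of transitions polynomial in $|P|$ and $|\wts|$, where $|P|=|\States|\cdot|\Sigma|\cdot(|\States|+1)^{2|\Gamma|}$ is the number of partial tiles, polynomial in $\wts$; moreover $\Bc_k$ is computable in time polynomial in $\wts$, the only nontrivial ingredient being the membership lookups $t\in\Tiles$ needed for the $\forget{i}$-transitions. The part I expect to require the most care is not any single step but the bookkeeping itself: pinning down precisely that (i)~each vertex is forgotten exactly once, so its weight is counted once, and (ii)~a forgotten vertex has already received all of its incident edges, so the stored partial tile is the full tile \emph{with the correct type} -- this is what makes the membership check performed at a $\forget{i}$-letter both sound and complete.
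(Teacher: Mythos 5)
Your construction is essentially the paper's: states are partial maps from colours $[k]$ to partially built tiles, creation and edge-addition transitions have weight $\sone$ while the tile's weight is charged at $\forget{i}$, the empty map is the unique initial and final state, and the state-count/complexity analysis matches. The only cosmetic difference is that you block a $\forget{i}$-transition when the accumulated tile is not in $\Tiles_{\typemap(v)}$, whereas the paper first completes $\Tiles$ with weight-$\szero$ tiles and always allows the transition with weight $\Weight(\delta(i))$ --- these are equivalent since $\szero$ annihilates products and is neutral for sums.
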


\begin{proof}
  Let $\wts = (\States, \Tiles, \Weight)$ be a WTS over
  $(\Edgenames,\Nodelabels)$-graphs.  By adding tiles with weight $\szero$, we 
  may assume wlog that $\Tiles$ contains all possible tiles. Fix $k\geq1$. 

  A state of $\Bc_k$ is a partial map $\delta\colon[k]\to\Tiles$.  When reading
  a \kPT $\tau$ with $\wsem{\tau}=(G,\chi)$, the automaton will guess a labelling
  $\rho\colon V\to\States$ of vertices of $G$ with states of $\wts$ and will reach
  a state $\delta$ satisfying the following two conditions:
  \begin{enumerate}[nosep]
    \item $\dom(\delta)=\dom(\chi)\subseteq[k]$ is the set of active 
    colors, 
    
    \item for each active color $i\in\dom(\chi)$, 
    $\delta(i)=(\fin(i),q(i),a(i),\fout(i))=\tile_\rho(\chi(i))$ is the current 
    $\rho$-tile at vertex $\chi(i)$ in $G$.
  \end{enumerate}
  The only initial state is the empty map $\delta_\emptyset$ with 
  $\dom(\delta_\emptyset)=\emptyset$. 
  This is also the only final state, which is reached on a \kPT
  $\tau$ if all colors have been forgotten: $\wsem{\tau}=(G,\chi_\emptyset)$.

  Transitions of the word automaton $\Bc_k$ are given 
  in Table~\ref{tbl:transitions-word-B}.  
  As above, we write $\delta(i)=(\fin(i),q(i),a(i),\fout(i))$ and
  $\delta'(i)=(\fin'(i),q'(i),a'(i),\fout'(i))$.

\begin{table}[tbp]
  \centering
  \noindent
  \begin{tabular}{|p{15mm}|p{115mm}|}
    \hline
    $\delta\xrightarrow{(i,a)}\delta'$
    &
    if $i\notin \dom(\delta)$. 
    Then, $\dom(\delta')=\dom(\delta)\cup\{i\}$,
    $\delta'(j)=\delta(j)$ for all $j\in \dom(\delta)$, and 
    $\delta'(i)=(\fempty,q,a,\fempty)$ for some $q\in Q$.
    \par The weight of this transition is $\sone$.
    \\ \hline
    $\delta\xrightarrow{\forget{i}}\delta'$ 
    &
    if $i\in \dom(\delta)$. Then $\delta'$ is the restriction of 
    $\delta$ to $\dom(\delta')=\dom(\delta)\setminus\{i\}$.
    \par The weight of this transition is $\Weight(\delta(i))$.
    \\ \hline
    $\delta\xrightarrow{\add{i,j}{\gamma}}\delta'$ 
    &
    if $i,j\in \dom(\delta)$, $i\neq j$, 
    $\gamma\notin\dom(\fout(i))$ and $\gamma\notin\dom(\fin(j))$.
    \par
    Then, $\dom(\delta')=\dom(\delta)$, $\delta'(\ell)=\delta(\ell)$ 
    for all $\ell\in \dom(\delta)\setminus\{i,j\}$,\par 
    $\delta'(i)=(\fin(i),q(i),a(i),\fout(i)\cup[\gamma\mapsto q(j)])$, \par
    $\delta'(j)=(\fin(j)\cup[\gamma\mapsto q(i)],q(j),a(j),\fout(j))$.
    \par The weight of this transition is $\sone$.
    \\ \hline
  \end{tabular}
  \caption{Transitions of the weighted word automaton $\Bc_k$.}
  \label{tbl:transitions-word-B}
\end{table}
  
  The number of partial maps from $A$ to $B$ is $(1+|B|)^{|A|}$.  Hence, the
  number of states of $\Bc_k$ is $(1+|\Tiles|)^{1+k}$.  In a tile
  $(\fin,q,a,\fout)\in\Tiles$, both $\fin$ and $\fout$ can be seen as partial
  maps from $\Gamma$ to $Q$.  Hence,
  $|\Tiles|=(1+|Q|)^{2|\Gamma|}\cdot|Q|\cdot|\Sigma|$.  Also,
  $|\Omega_k|=(1+k)(|\Sigma|+1)+(1+k)^{2}|\Gamma|$.  We deduce that, if
  $\Sigma,\Gamma,k$ are fixed, the automaton $\Bc_k$ can be constructed in
  polynomial time wrt.\ the given WTS $\wts$.
  
  Notice that we can reduce the size of $\Bc_k$ if we only
  consider states $\delta\colon[k]\to\Tiles$ such that for all
  $i\in\dom(\delta)$ the tile $\delta(i)=(\fin(i),q(i),a(i),\fout(i))$ is a
  subtile of some tile $t=(\fin',q(i),a(i),\fout')\in\Tiles$ with
  $\Weight(t)\neq\szero$.  By subtile we mean that $\fin(i)$ is the restriction
  of $\fin'$ to $\dom(\fin(i))$, i.e., $\fin(i)(\gamma)=\fin'(\gamma)$ for all
  $\gamma\in\dom(\fin(i))\subseteq\dom(\fin')$; and similarly $\fout(i)$ is the
  restriction of $\fout'$ to $\dom(\fout(i))$.
\end{proof}

\subparagraph{Evaluation of a weighted word automaton.}

\begin{lemma}\label{lem:weighted-word-automaton-evaluation}
  Given a weighted word automaton $\Bc$ and an input word 
  $w\in\Sigma^{*}$, we can compute $\sem{\Bc}(w)$ in time 
  $\mathcal{O}(|\Bc|\cdot|w|)$.
\end{lemma}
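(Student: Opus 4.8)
The plan is to compute, by a single left-to-right pass over the word $w=a_1\cdots a_n$, the vector of \emph{forward weights}: for each prefix length $\ell$ and each state $q\in Q$, the value $F_\ell(q)\in S$ equal to the semiring-sum, over all runs of $\Bc$ on $a_1\cdots a_\ell$ that start in an initial state and end in $q$, of the product of the transition weights along the run. Concretely, I would initialize $F_0(q)=\sone$ if $q\in I$ and $F_0(q)=\szero$ otherwise. Then for $\ell$ from $1$ to $n$, I set
\begin{align*}
  F_\ell(q) &= \bigoplus_{(p,a_\ell,q)\in T} F_{\ell-1}(p)\stimes\Weight((p,a_\ell,q))
\end{align*}
for every $q\in Q$, reading off each $F_{\ell-1}(p)$ and accumulating into $F_\ell(q)$. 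Finally I output $\sem{\Bc}(w)=\bigoplus_{q\in F} F_n(q)$. Correctness is a straightforward induction on $\ell$: the inductive hypothesis states that $F_\ell(q)$ is the sum of weights of all partial runs on the prefix $a_1\cdots a_\ell$ ending in $q$; the inductive step uses distributivity of $\stimes$ over $\splus$ in the semiring to split the sum over length-$\ell$ runs according to their last transition, and associativity/commutativity of $\splus$ to regroup. The base case $\ell=0$ is the definition of a run of length $0$ (which has weight $\sone$ and must start in an initial state).

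For the complexity bound, note that computing the whole table $F_0,\ldots,F_n$ requires, at each step $\ell$, touching each transition $(p,a_\ell,q)\in T$ exactly once: reading $F_{\ell-1}(p)$, performing one semiring product and one semiring addition. Summed over all $\ell$, this is $\mathcal{O}(|w|\cdot|T|)$ semiring operations, hence $\mathcal{O}(|\Bc|\cdot|w|)$ under the stated assumption that sum and product in the semiring take constant time; the final step contributes only $\mathcal{O}(|Q|)$ more additions. One should also observe that only two consecutive layers $F_{\ell-1}$ and $F_\ell$ need to be kept in memory at any time if one only wants the final value, though this is not needed for the time bound.

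There is no serious obstacle here; this is the classical "forward algorithm" / matrix-power viewpoint (the step above is left-multiplication of the row vector $F_{\ell-1}$ by the weighted transition matrix for letter $a_\ell$), and the only point requiring care is to phrase correctness so that it works over an arbitrary commutative semiring — i.e., to invoke only distributivity and the monoid axioms, never subtraction or idempotence, and to handle the possibility of multiple initial and multiple final states by the stated sums. Since this lemma is used in Step 3 of Theorem~\ref{thm:kPW-feval} only for completeness (Algorithm~\ref{alg:tree-eval} on trees subsumes it), a short write-up along these lines suffices.
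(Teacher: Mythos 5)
Your proposal is correct and is essentially the same argument as the paper's: the paper phrases it as computing $I\cdot\mu(a_1)\cdots\mu(a_m)\cdot F$ left-to-right, where each step is a row-vector-by-matrix product costing $\mathcal{O}(|Q|^{2})$, which is exactly your forward-weight recurrence (as you yourself note), with correctness resting on distributivity in the semiring. Your write-up just makes the induction and the handling of initial/final states slightly more explicit; no substantive difference.
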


\begin{proof}
  We defined a weighted word automaton as a tuple $\Bc=(Q,T,I,F,\Weight)$.
  Another equivalent representation of $\Bc$ allows to
  compute efficiently the value $\sem{\Bc}(w)$ on a given word $w\in\Sigma^{*}$.
  Assume that $Q=\{1,\ldots,n\}$.  We view $I\in\{0,1\}^Q$ as a row vector and
  $F\in\{0,1\}^Q$ as a column vector.  For each $a\in\Sigma$, we let $\mu(a)\in
  S^{Q\times Q}$ be the $n\times n$ matrix defined by
  $\mu(a)_{i,j}=\Weight((i,a,j))\in S$ (giving weight $\szero$ for missing
  transitions, we may assume wlog that $T=Q\times\Sigma\times Q$).  Square
  matrices over the semiring $S$ form a monoid with matrix multiplication.  Hence,
  $\mu$ extends to a morphism $\mu\colon\Sigma^{*}\to S^{Q\times Q}$ by
  $\mu(w)=\mu(a_1)\cdot\mu(a_2)\cdots\mu(a_m)$ if $w=a_1a_2\cdots a_m$.
  Using distributivity of the semiring $S$, we obtain
  $\sem{\Bc}(w)=I\cdot\mu(w)\cdot F
  =I\cdot\mu(a_1)\cdot\mu(a_2)\cdot\cdots\cdot\mu(a_m)\cdot F$. Computing these 
  products from left to right (left associativity), we perform $n$ products of a 
  row vector by a matrix, and finally the product of the resulting row vector
  $I\cdot\mu(a_1)\cdot\mu(a_2)\cdot\cdots\cdot\mu(a_m)$ by the column vector $F$.
  Assuming that sum and product in the semiring $S$ take constant time, the 
  product of a row vector by a matrix takes time $\mathcal{O}(n^{2})$. Hence the 
  overall time complexity of this evaluation is $\mathcal{O}(m\cdot 
  n^{2})=\mathcal{O}(|w|\cdot|\Bc|)$.
\end{proof}

\subsection{Bounded tree-width evaluation}

We extend the efficient evaluation of WTS for graphs of bounded path-width to
graphs of bounded tree-width, which is a larger class of graphs.  For instance,
nested words may have unbounded path-width but their tree-width is at most 2.
As for path-width, tree-width can be defined via tree decompositions: instead of
a sequence of subsets of vertices, we use a tree of subsets of vertices.  Since
we will use weighted tree automata to achieve the efficient evaluation over
graphs of bounded tree-width, we define directly tree terms.  These are similar
to \kPTs, with an additional binary union $\ttunion$.

\subparagraph{Tree terms (\TTs)}  form an algebra to define labeled graphs.  With
$a \in \Sigma$, $\gamma \in \Gamma$ and $i,j\in[k]=\{0,1,\ldots,k\}$, the syntax
of \kTTs over $(\Gamma,\Sigma)$ is given by
$$
\tau ::= (i,a) \mid \add{i,j}{\gamma} \tau \mid \forget{i} \tau
\mid \tau \ttunion \tau 
$$
Each \kTT represents a colored graph $\wsem\tau=(G_\tau,\chi_\tau)$ where
$G_\tau$ is a $(\Gamma,\Sigma)$-labeled graph and $\chi_\tau\colon [k]\to V$ is
a partial injective function coloring some vertices of $G_\tau$. Colors in 
$\dom{\chi_\tau}$ are said to be active in $\tau$. 
The semantics is defined as for $\kPTs$: a leaf $(i,a)$ creates a graph with a
single $a$-labeled vertex with color $i$, $\forget{i}$ removes color $i$ from
the domain of the color map, and $\add{i,j}{\alpha}$ adds an $\alpha$-labeled
edge between the vertices colored $i$ and $j$ (if such vertices exist).  
Formally, if $\wsem{\tau}=(V,(E_\gamma)_{\gamma\in\Gamma},\lambda,\chi)$ then
\begin{itemize}[nosep]
  \item 
  $\wsem{\add{i,j}{\alpha}\tau}=(V,(E'_\gamma)_{\gamma\in\Gamma},\lambda,\chi)$ 
  with $E'_\gamma=E_\gamma$ if $\gamma\neq\alpha$ and
  $$
  E'_\alpha= 
  \begin{cases}
    E_\alpha & \text{if } \{i,j\}\not\subseteq\dom(\chi) \\
    E_\alpha\cup\{(\chi(i),\chi(j))\} & \text{otherwise.}
  \end{cases}
  $$

  \item
  $\wsem{\forget{i}\tau}=(V,(E_\gamma)_{\gamma\in\Gamma},\lambda,\chi')$ 
  with $\dom(\chi')=\dom(\chi)\setminus\{i\}$ and $\chi'(j)=\chi(j)$ for all 
  $j\in\dom(\chi')$.
  
%
\end{itemize} 
The main difference with \kPTs is $\ttunion$ which takes the union of the two graphs,
merging vertices with the same colors, if any.
\begin{itemize}[nosep]
  \item 
  Formally, consider $\tau'\ttunion\tau''$ with
  $\wsem{\tau'}=(G',\chi')=(V',(E'_\gamma)_{\gamma\in\Gamma},\lambda',\chi')$ and
  $\wsem{\tau''}=(G'',\chi'')=(V'',(E''_\gamma)_{\gamma\in\Gamma},\lambda'',\chi'')$.
  Let $I=\dom(\chi')\cap\dom(\chi'')$ be the set of colors that are defined in
  both graphs.  Wlog, we may assume that $V'\cap V''=\chi'(I)=\chi''(I)$ and
  $\chi'(i)=\chi''(i)$ for all $i\in I$, i.e., we may rename the vertices so
  that the shared colors define the shared vertices.
  The union $\tau'\ttunion\tau''$ is well-defined only if
  the shared vertices have the same labels:
  $\lambda'(\chi'(i))=\lambda''(\chi''(i))$ for all $i\in I$.
  Then, $\wsem{\tau'\ttunion\tau''}=(G'\cup G'',\chi'\cup\chi'')
  =(V,(E_\gamma)_{\gamma\in\Gamma},\lambda,\chi)$ where
  $V=V'\cup V''$, $\lambda=\lambda'\cup\lambda''$, and 
  $E_\gamma=E'_\gamma\cup E''_\gamma$ for all $\gamma\in \Gamma$.  
\end{itemize}
The tree-width of a nonempty graph $G$ is the least $k\geq1$ such that 
$G=G_\tau$ for some \kTT $\tau$.

\smallskip
Trees have tree-width 1, and as a special case, words also have tree-width 1.
Nested words have tree-width (at most) 2~\cite{MadhusudanParlato-POPL11}.  They
are words with an additional binary relation from pushes to matching pops, which
are used to represent behaviours of pushdown automata.
On the other end, grids as used for instance in Example~\ref{ex:permanent},
have unbounded tree-width.  More precisely, an $n\times n$ grid has tree-width
$n$.

\smallskip
We will focus on a regular subset of terms which ensures that the semantics is
well-defined and that the \kTTs do not contain redundant operations such as
$\add{i,j}{\gamma}\add{i,j}{\gamma}\tau$ or
$\add{i,j}{\gamma}\tau_1\ttunion\add{i,j}{\gamma}\tau_2$.
A \kTT is \emph{well-formed} if the following are satisfied:
\begin{enumerate}[nosep]
  \item  if $\forget{i}\tau'$ is a subterm of $\tau$ then $i$ is active in $\tau'$,

  \item if $\add{i,j}{\gamma}\tau'$ is a subterm of $\tau$ then $i,j$ are
  active in $\tau'$ and the edge $\gamma$ was not already added in $\tau'$
  between $\chi_{\tau'}(i)$ and $\chi_{\tau'}(j)$.

  \item if $\tau'\ttunion\tau''$ is a subterm of $\tau$ then for all $i,j$ that
  are active in both $\tau'$ and $\tau''$, the vertices $\chi_{\tau'}(i)$ and
  $\chi_{\tau''}(i)$ have the same label from $\Sigma$, and we do not already
  have a $\gamma$-edge both between $(\chi_{\tau'}(i),\chi_{\tau'}(j))$ and
  $(\chi_{\tau''}(i),\chi_{\tau''}(j))$.
\end{enumerate}

\smallskip
The problem \kTW-\feval\ is to compute $\sem{\wts}(G)$, given a WTS $\wts$ and a
$(\Gamma,\Sigma)$-graph $G$ of tree-width at most $k$.

\begin{theorem}\label{thm:kTW-feval}
  The problem \kTW-\feval\ can be solved in linear time wrt.\ the input graph
  $G$ and polynomial time wrt.\ the input WTS $\wts$.
\end{theorem}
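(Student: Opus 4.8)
The plan is to mimic the three–step scheme used for Theorem~\ref{thm:kPW-feval}, replacing path decompositions by tree decompositions, \kPTs by \kTTs, and weighted word automata by weighted tree automata. First, from the input graph $G$ of tree-width at most $k$ I would compute in linear time a tree decomposition via Bodlaender's algorithm~\cite{Bodlaender_1996}, and then convert it, again in linear time wrt.\ $|G|$, into a well-formed \kTT $\tau$ with $\wsem{\tau}=(G,\emptyset)$; this is the tree analogue of Lemma~\ref{lem:path-term-and-decomposition}, obtained by processing the decomposition tree bottom-up, introducing fresh colors for newly appearing vertices with $(i,a)$, adding the relevant edges with $\add{i,j}{\gamma}$, combining the subtrees of a node with $\ttunion$, and forgetting colors of vertices that do not reappear higher up with $\forget{i}$. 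Since $|\tau|=\mathcal{O}(|G|)$, the dependence on the graph is already linear.

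Second, I would build a weighted tree automaton $\Bc_k$ equivalent to $\wts$ on graphs of tree-width at most $k$, extending Lemma~\ref{lem:WTS-2-WWA-k}. As in the word case, a state is a partial map $\delta\colon[k]\to\Tiles$ recording, for each active color $i$, the current $\run$-tile $\delta(i)=(\fin(i),q(i),a(i),\fout(i))$ at the vertex colored $i$ under a guessed labelling $\run$ (we again assume $\Tiles$ contains all tiles, some with weight $\szero$). The leaf, $\forget{i}$ and $\add{i,j}{\gamma}$ rules are exactly those of Table~\ref{tbl:transitions-word-B}; in particular the $\forget{i}$ rule is the only one carrying a nontrivial weight, namely $\Weight(\delta(i))$, since a color is forgotten precisely once the tile of that vertex is complete. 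The only accepting state at the root is the empty map $\delta_\emptyset$, reached when every color has been forgotten, i.e.\ $\wsem{\tau}=(G,\chi_\emptyset)$.

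The genuinely new ingredient — and the step I expect to be the main obstacle — is the binary $\ttunion$ rule. Given states $\delta'$ and $\delta''$ on the two subterms, the rule produces $\delta$ with $\dom(\delta)=\dom(\delta')\cup\dom(\delta'')$, copying $\delta'(i)$ or $\delta''(i)$ when $i$ is active on only one side, and, when $i$ is active on both sides, setting
$$
\delta(i)=\bigl(\fin'(i)\cup\fin''(i),\ q'(i),\ a'(i),\ \fout'(i)\cup\fout''(i)\bigr),
$$
provided the two partial tiles are compatible: $q'(i)=q''(i)$, $a'(i)=a''(i)$, and $\dom(\fin'(i)),\dom(\fin''(i))$ are disjoint as are $\dom(\fout'(i)),\dom(\fout''(i))$ (otherwise no such transition exists); the weight is $\sone$. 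The crux is to show that these compatibility conditions are exactly what well-formedness of the \kTT guarantees on successful runs: condition~3 of well-formed \kTTs forbids the same $\gamma$-edge at shared colors on both sides, which yields the domain-disjointness, while the fact that in $G=G_\tau$ a merged vertex carries a single label and, under any fixed $\run$, a single state gives $a'(i)=a''(i)$ and $q'(i)=q''(i)$. With this in place, one proves by structural induction on $\tau$ that the runs of $\Bc_k$ on $\tau$ are in bijection with the labellings $\run\colon V\to\States$ of $G$ (the state at each node of $\tau$ is forced by $\run$), and that the product of transition weights along a run equals $\bigotimes_{v\in V}\Weight(\tile_\run(v))=\Weight(\run)$: leaf and $\add{}{}$ transitions contribute $\sone$, each vertex contributes its tile weight exactly once, at its $\forget{}$, where the tree-decomposition property ensures all its edges have been added, and labellings that are not runs of $\wts$ survive only through $\szero$-weight tiles and are annihilated. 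Summing over runs yields $\sem{\Bc_k}(\tau)=\sem{\wts}(G)$.

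Finally, $\Bc_k$ has $(1+|\Tiles|)^{1+k}$ states, hence size polynomial in $\wts$ for fixed $k,\Gamma,\Sigma$ (the number of $\ttunion$-transitions is at most quadratic in the number of states, since $\delta$ is determined by the compatible pair $(\delta',\delta'')$, and the other transition counts are polynomial as in the word case), and we evaluate $\sem{\Bc_k}(\tau)$ bottom-up in time $\mathcal{O}(|\Bc_k|\cdot|\tau|)$ using Algorithm~\ref{alg:tree-eval}, assuming semiring operations take constant time — the standard tree-automaton evaluation that maintains at each node the vector of accumulated weights indexed by states and propagates it through the unary transition maps and the bilinear $\ttunion$ combination. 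Combined with $|\tau|=\mathcal{O}(|G|)$ from the first step, this gives the claimed linear-in-$G$, polynomial-in-$\wts$ bound.
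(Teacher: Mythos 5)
Your proposal follows essentially the same route as the paper's proof: Bodlaender's algorithm plus the tree-term analogue of Lemma~\ref{lem:path-term-and-decomposition}, then a weighted tree automaton whose states are partial maps $[k]\to\Tiles$ with exactly the leaf, $\add{i,j}{\gamma}$, $\forget{i}$ and $\ttunion$ transitions (and compatibility conditions) of Lemma~\ref{lem:WTS-2-WtreeA-k}, and finally evaluation by Algorithm~\ref{alg:tree-eval} in time $\mathcal{O}(|\tau|\cdot|\Bc_k|)$. The argument is correct and matches the paper's construction, including the key point that each vertex's tile weight is charged exactly once at its $\forget{}$ operation.
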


\begin{proof}
  The proof follows the same three steps as for Theorem~\ref{thm:kPW-feval}
  using tree terms instead of \kPTs and weighted tree automata instead of
  weighted word automata.
  \begin{enumerate}[nosep]
    \item From the input graph $G$, which is assumed to be of tree-width at most
    $k$, we compute in linear time a tree decomposition using Bodlaender's
    algorithm \cite{Bodlaender_1996}.  Then, similarly to
    Lemma~\ref{lem:path-term-and-decomposition}, we compute in linear time a
    well-formed \kTT $\tau$ such that $\wsem{\tau}=(G,\emptyset)$.  In
    particular, $|\tau|=\mathcal{O}(|G|)$.
  
    \item Using Lemma~\ref{lem:WTS-2-WtreeA-k} below, from the WTS $\wts$ we
    construct in polynomial time an equivalent weighted tree automaton $\Bc_k$
    on graphs of tree-width at most $k$:
    $\sem{\wts}(G)=\sem{\Bc_k}(\tau)$.
  
    \item  We compute $\sem{\Bc_k}(\tau)$ with Algorithm~\ref{alg:tree-eval}.
    The main complexity comes from the call \textsc{TreeEval}.  Executing the
    body of this function (without the recursive calls) takes time
    $\mathcal{O}(|\Bc_k|)$.  Hence, the overall time complexity of this evaluation
    is $\mathcal{O}(|\tau|\cdot|\Bc_k|)$.
    \qedhere
  \end{enumerate}
\end{proof}

A weighted (binary) tree automaton over alphabet $\Sigma$ is usually given as a
tuple $\Bc=(Q,T,F,\Weight)$ where $F\subseteq Q$ is the subset of accepting
states, $T\subseteq(\{\bot\}\cup Q\cup Q^{2})\times\Sigma\times Q$ defines
the bottom-up transitions and $\Weight\colon T\to S$ gives weights to
transitions.  This is an equivalent representation of a
WTS over $(\{\nearrow,\nwarrow\},\Sigma)$.

\begin{lemma}\label{lem:WTS-2-WtreeA-k}
  Given a WTS $\wts$ over $(\Edgenames,\Nodelabels)$-graphs and $k>0$, we can
  compute in polynomial time wrt.\ $\wts$, a weighted tree automaton $\Bc_k$ which
  is equivalent to $\wts$ over graphs of tree-width at most $k$.
  Here, equivalent means that for all well-formed \kTTs $\tau$ with
  $\wsem{\tau}=(G,\emptyset)$, we have $\sem{\wts}(G)=\sem{\Bc_k}(\tau)$.  
\end{lemma}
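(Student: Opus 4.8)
The plan is to mimic the construction of Lemma~\ref{lem:WTS-2-WWA-k} for words, replacing the word automaton by a bottom-up weighted tree automaton that reads a well-formed \kTT and, along the way, guesses a labelling $\rho$ of the vertices of $G_\tau$ by states of $\wts$. The state of $\Bc_k$ after processing a subterm $\tau'$ with $\wsem{\tau'}=(G',\chi')$ will again be a partial map $\delta\colon[k]\to\Tiles$ whose domain is exactly the set of active colors $\dom(\chi')$, and such that for each active color $i$, $\delta(i)=\tile_\rho(\chi'(i))$ records the partial tile currently accumulated at the colored vertex $\chi'(i)$ — that is, the state $\rho(\chi'(i))$, the $\Sigma$-label $\lambda'(\chi'(i))$, and the incoming/outgoing neighbour states restricted to the edges added so far in $\tau'$. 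As in the word case, by padding $\Tiles$ with weight-$\szero$ tiles we may assume it is closed under taking subtiles, and the number of states of $\Bc_k$ is $(1+|\Tiles|)^{1+k}$, which is polynomial in $\wts$ for fixed $\Gamma,\Sigma,k$; the only accepting state is the empty map $\delta_\emptyset$, reached exactly when all colors have been forgotten, i.e. on \kTTs $\tau$ with $\wsem{\tau}=(G,\emptyset)$.

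First I would specify the three non-leaf transition rules exactly as in Table~\ref{tbl:transitions-word-B}: a leaf $(i,a)$ goes to the singleton map $[i\mapsto(\fempty,q,a,\fempty)]$ for a nondeterministically chosen $q\in\States$, with weight $\sone$; a $\forget{i}\tau'$ rule checks $i$ is active, restricts $\delta$ to $\dom(\delta)\setminus\{i\}$, and emits weight $\Weight(\delta(i))$ (this is the point where the fully-assembled tile at $\chi(i)$ contributes its weight, and well-formedness guarantees $i$ will be forgotten exactly once); an $\add{i,j}{\gamma}\tau'$ rule checks $i,j$ active and $\gamma$ not yet present on either side, and updates $\delta(i)$ and $\delta(j)$ by adding $[\gamma\mapsto q(j)]$ to $\fout(i)$ and $[\gamma\mapsto q(i)]$ to $\fin(j)$, with weight $\sone$. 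Then I would add the genuinely new rule, the binary transition for $\ttunion$: from states $\delta'$ (for the left child) and $\delta''$ (for the right child) we may go to a state $\delta$ provided the two maps are \emph{compatible} on shared colors $I=\dom(\delta')\cap\dom(\delta'')$ — meaning for every $i\in I$ the tiles $\delta'(i)$ and $\delta''(i)$ have the same state component and the same $\Sigma$-label, and their $\fin$ (resp.\ $\fout$) parts have disjoint domains so the edge sets do not clash — in which case $\dom(\delta)=\dom(\delta')\cup\dom(\delta'')$, $\delta(i)=\delta'(i)$ for $i\in\dom(\delta')\setminus I$, $\delta(i)=\delta''(i)$ for $i\in\dom(\delta'')\setminus I$, and for $i\in I$ we set $\delta(i)$ to have the common state and label, $\fin(i)=\fin'(i)\cup\fin''(i)$ and $\fout(i)=\fout'(i)\cup\fout''(i)$; the weight of this transition is $\sone$. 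Because well-formedness of $\tau$ forbids adding the same $\gamma$-edge in both branches, the domain-disjointness condition is automatically respected by the terms we care about, so no weight is double-counted.

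The correctness argument then has two halves. For soundness, given any accepting run of $\Bc_k$ on $\tau$, the nondeterministic choices of $q$ at the leaves determine a labelling $\rho\colon V\to\States$; an induction on the structure of $\tau$ shows that the state reached after a subterm $\tau'$ is precisely the map recording the $\rho$-partial-tiles of the active vertices, and that the only weights emitted are the $\Weight(\tile_\rho(v))$ at the $\forget{}$ steps, each vertex being forgotten exactly once, so the weight of the run equals $\Weight(\rho)$ provided $\rho$ is a genuine run of $\wts$ — and it is, because at the moment a vertex $v$ is forgotten all its incident edges have been added (this is exactly the second path/tree-decomposition condition, rephrased for \kTTs), so $\delta(v)$ is the \emph{full} tile $\tile_\rho(v)$, which lies in $\Tiles$; if $\tile_\rho(v)\notin\Tiles_{\typemap(v)}$ the weight is $\szero$ by the padding convention, matching the fact that $\rho$ is not a run. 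Conversely, any run $\rho$ of $\wts$ on $G$ can be traced through $\tau$: at each leaf guess $q=\rho(\chi(i))$, and the transitions are all forced thereafter, yielding a run of $\Bc_k$ of weight $\Weight(\rho)$. Summing over all runs and using that the semiring is commutative (so the order of the $\forget{}$-emitted factors within a branch, and the interleaving across the two branches of a $\ttunion$, is irrelevant), we get $\sem{\Bc_k}(\tau)=\bigoplus_\rho\Weight(\rho)=\sem{\wts}(G)$.

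The main obstacle is the $\ttunion$ case — specifically making sure that merging two partial maps never loses or duplicates weight. The subtlety is that weight is emitted only at $\forget{}$, but a vertex may receive edges in \emph{both} subterms of a $\ttunion$ before being forgotten later, so the two partial tiles for a shared color must be consistently \emph{combined} (taking the union of the recorded neighbour maps) rather than simply one overriding the other; and one must check that well-formedness of the \kTT rules out the pathological case where the \emph{same} edge is recorded on both sides, which would otherwise make $\fin'(i)$ and $\fin''(i)$ overlap and force a choice. Once the compatibility condition in the $\ttunion$ transition is stated to match clauses (2)--(3) of well-formedness, the structural induction goes through routinely, paralleling the word case. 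The size and time bounds are identical to Lemma~\ref{lem:WTS-2-WWA-k}: $|\Tiles|=(1+|\States|)^{2|\Gamma|}\cdot|\States|\cdot|\Nodelabels|$, $|\Bc_k|$ is polynomial in $\wts$ for fixed parameters, and the automaton is computable in polynomial time.
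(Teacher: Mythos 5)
Your construction is the same as the paper's: states are partial maps $[k]\to\Tiles$ recording the partial $\rho$-tiles of the actively colored vertices, weight is emitted at $\forget{i}$, and the binary $\ttunion$ transition merges compatible maps (same state and label on shared colors, disjoint $\fin$/$\fout$ domains) with weight $\sone$, with the same state-count and polynomial-time analysis. The correctness induction you sketch is exactly the argument underlying the paper's proof, so this is essentially the same proof.
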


\begin{proof}
  Let $\wts = (\States, \Tiles, \Weight)$ be a WTS over
  $(\Edgenames,\Nodelabels)$-graphs.  By adding tiles with weight $\szero$, we 
  may assume wlog that $\Tiles$ contains all possible tiles. Fix $k\geq1$. 
  
  \begin{table}[btp]
    \centering
    \begin{tabular}{|p{17mm}|p{110mm}|}
      \hline
      $\bot \xrightarrow{(i,a)}\delta$ 
      &
      if $\dom(\delta)=\{i\}$ and $\delta(i)=(\fempty,q,a,\fempty)$ for some $q\in Q$.
      \par The weight of this transition is $\sone$.
      \\ \hline
      $\delta\xrightarrow{\add{i,j}{\gamma}}\delta'$ 
      &
      if $i,j\in \dom(\delta)$, $i\neq j$, 
      $\gamma\notin\dom(\fout(i))$ and $\gamma\notin\dom(\fin(j))$.
      \par
      Then, $\dom(\delta')=\dom(\delta)$, $\delta'(\ell)=\delta(\ell)$ for all 
      $\ell\in\dom(\delta)\setminus\{i,j\}$,\par 
      $\delta'(i)=(\fin(i),q(i),a(i),\fout(i)\cup[\gamma\mapsto q(j)])$, \par
      $\delta'(j)=(\fin(j)\cup[\gamma\mapsto q(i)],q(j),a(j),\fout(j))$.
      \par The weight of this transition is $\sone$.
      \\ \hline
      $\delta\xrightarrow{\forget{i}}\delta'$ 
      &
      if $i\in \dom(\delta)$. Then $\delta'$ is the restriction of 
      $\delta$ to $\dom(\delta')=\dom(\delta)\setminus\{i\}$.
      \par The weight of this transition is $\Weight(\delta(i))$.
      \\ \hline
      $\delta',\delta''\xrightarrow{\ttunion}\delta$
      &
      if for all $i\in \dom(\delta')\cap \dom(\delta'')$ we have: $q'(i)=q''(i)$, $a'(i)=a''(i)$, and \par
      $\dom(\fin'(i))\cap\dom(\fin''(i))=\emptyset=\dom(\fout'(i))\cap\dom(\fout''(i))$.
      
      Then, $\delta$ is the union of $\delta'$ and $\delta''$: 
      $\dom(\delta)=\dom(\delta')\cup \dom(\delta'')$, \par
      $\delta(i)=\delta'(i)$ for all $i\in \dom(\delta')\setminus \dom(\delta'')$, \par
      $\delta(i)=\delta''(i)$ for all $i\in \dom(\delta'')\setminus \dom(\delta')$, and \par
      $\delta(i)=(\fin'(i)\cup\fin''(i),q'(i),a'(i),\fout'(i)\cup\fout''(i))$ for 
      $i\in \dom(\delta')\cap \dom(\delta'')$.
      \par The weight of this transition is $\sone$.
      \\ \hline
    \end{tabular}
    \caption{Transitions of the weighted tree automaton $\Bc_k$.}
    \label{tbl:transitions-tree-B}
  \end{table}

  A state of $\Bc_k$ is a partial map $\delta\colon[k]\to\Tiles$.
  When reading a \kTT $\tau$ with $\wsem{\tau}=(G,\chi)$, the automaton will guess a 
  labelling $\rho\colon V\to\States$ of vertices of $G$ with states of $\wts$ and 
  will reach a state $\delta$ satisfying the following two conditions:
  \begin{enumerate}[nosep]
    \item $\dom(\delta)=\dom(\chi)\subseteq[k]$ is the set of active colors, 
    
    \item for each active color $i\in\dom(\chi)$, 
    $\delta(i)=(\fin(i),q(i),a(i),\fout(i))=\tile_\rho(\chi(i))$ is the current 
    $\rho$-tile at vertex $\chi(i)$ in $G$.
  \end{enumerate}
  The only accepting state is the empty map $\delta_\emptyset$ with 
  $\dom(\delta_\emptyset)=\emptyset$, which is reached on a \kTT
  $\tau$ if all colors have been forgotten: $\wsem{\tau}=(G,\chi_\emptyset)$.
  
  The bottom-up transitions of the tree automaton $\Bc_k$ are given in
  Table~\ref{tbl:transitions-tree-B}.  
  As above, for $i\in\dom(\delta)$ we write
  $\delta(i)=(\fin(i),q(i),a(i),\fout(i))$ and similarly for $\delta'$ and
  $\delta''$.
  
  The analysis of the number of states of $\Bc_k$ is as in the proof of
  Lemma~\ref{lem:WTS-2-WWA-k}.  We deduce that, if $\Sigma,\Gamma,k$ are fixed,
  the automaton $\Bc_k$ can be constructed in polynomial time wrt.\ the given
  WTS $\wts$.
\end{proof}

\begin{algorithm}[t]
  \caption{Evaluation algorithm for a weighted tree automaton $\Bc=(Q,T,F,\Weight$).}
  \label{alg:tree-eval}
  \begin{algorithmic}[1]
    \Function{main}{$\tau\colon\mathsf{term}$}: value from $S$ \Comment Computes $\sem{\Bc}(\tau)$
      \State $\val\gets \Call{TreeEval}{\tau}$; $x\gets\szero$
      \ForAllOne{$q\in F$}{$x\gets x+\val[q]$}\EndForAllOne
      \State \Return $x$
    \EndFunction
    \Function{TreeEval}{$\tau\colon\mathsf{term}$}: array indexed by $Q$ of values from $S$ \\
      \Comment{$\textsc{TreeEval}(\tau)[q]$ is the sum of the weights of the 
      runs of $\Bc$ on $\tau$ reaching state $q$.}
      \Match{$\tau$}
        \Case{Leaf $a$}
          \ForAllOne{$q\in Q$}{$\val[q]\gets\Weight(\bot,a,q)$}\EndForAllOne
        \EndCase
        \Case{Unary $a(\tau_1)$}
          \State $\val_1\gets \Call{TreeEval}{\tau_1}$
          \ForAllOne{$q\in Q$}{$\val[q]\gets\szero$}\EndForAllOne
          \ForAllOne{$(q_1,a,q)\in T$}
            {$\val[q]\gets\val[q]+\val_1[q_1]\times\Weight(q_1,a,q)$}
          \EndForAllOne
        \EndCase
        \Case{Binary $a(\tau_1,\tau_2)$}
          \State $\val_1\gets \Call{TreeEval}{\tau_1}$; $\val_2\gets \Call{TreeEval}{\tau_2}$
          \ForAllOne{$q\in Q$}{$\val[q]\gets\szero$}\EndForAllOne
          \ForAll{$(q_1,q_2,a,q)\in T$}
            \State $\val[q]\gets\val[q]+\val_1[q_1]\times\Weight(q_1,q_2,a,q)\times\val_2[q_2]$
          \EndFor
        \EndCase
      \EndMatch
      \State \Return $\val$
    \EndFunction
  \end{algorithmic}
\end{algorithm}

\section{Discussions and conclusions}\label{sec:discussion}

\subparagraph{Connections with CSP.} The quantitative versions of the constraint
satisfaction problem (CSP) are closely related to the evaluation problem for
weighted tiling systems and graphs.  Classic (boolean) CSPs ask for the
existence of a solution of a set of constraints, as non-deterministic automata
ask for the existence of an accepting run.  In the \emph{valued}-CSP (see
e.g.~\cite{KrokhinZivny-DFU17}), weights (costs) are assigned to each constraint
depending on how the constraint is fulfilled, these weights are summed over all
constraints and the aim is to minimize this total cost.  This corresponds to our
evaluation problem in the \textsf{min-plus} tropical semiring.

The weighted counting CSP (weighted \#CSP) is defined similarly but uses a
$(+,\times)$-semiring such as $\N$, $\Z$, $\Q$, \ldots, (see
e.g.~\cite{BulatovDGJJR-jcss12,CarbonnelCooper-Constraints16}).  The cost of a
solution is the product of the weights over all constraints and the
value of the weighted \#CSP is the sum over all solutions.
Counting CSP (\#CSP) is obtained with semiring \Natural\ when functions in the
language only take values 0 or 1, thus counting the number of solutions of the
classic CSP.

One of the main problems in CSP is to determine conditions under which the
problems are tractable (polynomial time).  Feder and Vardi conjectured
\cite{FederVardi-saimcomp98} that, depending on the constraint language
$\Gamma$, problems in CSP($\Gamma$) are either in \textsf{P} or
\textsf{NP}-complete.  The dichotomy conjecture extends to \#CSP($\Gamma$),
saying that such counting problems are either in \textsf{FP} or
\#\textsf{P}-complete, see
e.g.~\cite{BulatovDalmau-IC07,DyerGoldbergJerrum-siamcomp09}.
In this paper, we show that for WTS, the evalution problem is 
\#\textsf{P}-complete (Theorem~\ref{thm:complexity}).

Most often the \emph{non-uniform} complexity is considered, meaning that the language
(for us the WTS) is not part of the input and the complexity only depends on the
instance (for us the input graph).  One such structural restriction is when the
constraint graph of the instance has bounded tree-width.  This is indeed related
to our efficient evaluation described in
Section~\ref{sec:complexity-btw-graphs}.  Our approach is different though since
we reduce WTS to weighted word/tree automata and obtain a complexity linear in
the input graph.

As future work, we plan to investigate more closely the relationship between 
weighted \#CSP and the evaluation problem for WTS. In particular, it would be 
interesting to see whether results on approximate computation which are widely 
studied for quantitative CSP can be transfered to weighted tiling systems.

\subparagraph{On the generality of the model.} 
Even though our WTS is defined to run over bounded degree graphs, we have seen
(cf.\ Remark~\ref{rem:arbitrarydegree})
that we can naturally model computational problems on arbitrary graphs that
can be input as the adjacency matrix.
The model of WGA \cite{DrosteD15} additionally has occurrence constraints
(boolean combinations of constraints of the form $\#\tile \ge n$, where $\tile
\in \Tiles$ and $n \in \N$).  A run is valid only if the occurrence constraints
are satisfied.  We could allow these constraints as well, without compromising
the complexity upper bounds.  In fact, we can allow more expressive
quantifier-free Presburger constraints on the tiles (e.g.,  $\#\tile_1 + \#\tile_2
= \#\tile_3$).  The NP machine witnessing the upper bounds can compute the
Parikh vector of the tiles used in a guessed run, and check in polynomial time
whether the constraints are satisfied.

\subparagraph{Variants.}
The evaluation problem $\fval$ is a function problem.  The decision variants
 correspond to threshold languages such as, is the computed weight $\{>,
\geq, <, \leq, =, \neq\}$ $s$, $s$ being a threshold.  There are further
variants depending on whether the threshold $s$ is part of the input or is
fixed.  The complexity depend on the semiring as well as on the value of the threshold when it is
fixed. 
 
\subparagraph{Conclusion.} 
We have given tight complexity bounds for the evaluation problem 
for various semirings.  Our complexity upper bounds allows
weights to be given in binary for problems over $(+, \times)$-semirings.
However for tropical semirings the weights are assumed to be given in unary.
While our upper bounds hold for arbitrary graphs, lower bounds are given
uniformly for pictures (grid graphs).  Further if we assume that the input graph
does not have unbounded grid as a minor (bounded tree-width), then we provide
efficient evaluation algorithm.



\bibliography{weighted-tiling-automata}

\end{document}